\documentclass[sn-mathphys-num]{sn-jnl}

\usepackage{graphicx}%
\usepackage{multirow}%
\usepackage{amsmath,amssymb,amsfonts}%
\usepackage{amsthm}%
\usepackage{mathrsfs}%
\usepackage[title]{appendix}%
\usepackage[usenames,dvipsnames]{xcolor}%
\usepackage{textcomp}%
\usepackage{manyfoot}%
\usepackage{booktabs}%
\usepackage{algorithm}%
\usepackage{algorithmicx}%
\usepackage{algpseudocode}%
\usepackage{listings}%

\theoremstyle{thmstyleone}%
\newtheorem{theorem}{Theorem}[section]

\theoremstyle{thmstyletwo}%
\newtheorem{example}{Example}%
\newtheorem{remark}{Remark}%
\newtheorem{corollary}{Corollary}
\theoremstyle{thmstylethree}%
\newtheorem{definition}{Definition}%

\raggedbottom

    \usepackage{tikz}

\newcommand{\F}{\mathbb{F}}
\newcommand{\z}{\mathbb{Z}}

\newcommand{\Y}{\mathcal{Y}}
\newcommand{\X}{\mathcal{X}}
\newcommand{\HH}{\mathcal{H}}
\newcommand{\PP}{\mathbb{P}}

\newcommand{\bpf}{\begin{proof}}
\newcommand{\epf}{\end{proof}}

\newcommand{\supp}{ \textrm{Supp } }

\newcommand{\rmv}[1]{}

\begin{document}

\title[Algebraic H-LRCs with nested affine subspace recovery]{Algebraic hierarchical locally recoverable codes with nested 
affine subspace recovery}

\author*[1]{\fnm{Kathryn} \sur{Haymaker}}\email{kathryn.haymaker@villanova.edu}

\author[2]{\fnm{Beth} \sur{Malmskog}}\email{bmalmskog@coloradocollege.edu}
\equalcont{These authors contributed equally to this work.}

\author[3]{\fnm{Gretchen} \sur{Matthews}}\email{gmatthews@vt.edu}
\equalcont{These authors contributed equally to this work.}

\affil*[1]{\orgdiv{Department of Mathematics and Statistics}, \orgname{Villanova University}, \orgaddress{\street{800 E. Lancaster Ave}, \city{Villanova}, \postcode{19085}, \state{Pennsylvania}, \country{USA}}}

\affil[2]{\orgdiv{Department of Mathematics \& Computer Science }, \orgname{Colorado College}, \orgaddress{\street{819 N. Tejon St.}, \city{Colorado Springs}, \postcode{80903}, \state{Colorado}, \country{USA}}}

\affil[3]{\orgdiv{Department of Mathematics}, \orgname{Virginia Tech}, \orgaddress{\street{225 Stanger Street}, \city{Blacksburg}, \postcode{24061-1026}, \state{Virginia}, \country{USA}}}

\abstract{
Codes with locality, also known as locally recoverable codes, allow for recovery of erasures using proper subsets of other coordinates. These subsets are typically of small cardinality to promote recovery using limited network traffic and other resources. Hierarchical locally recoverable codes allow for recovery of erasures using sets of other symbols whose sizes increase as needed to allow for recovery of more symbols. In this paper, we 
describe a hierarchical recovery structure arising from geometry in Reed-Muller codes and codes with availability from fiber products of curves.    
We demonstrate how the fiber product hierarchical codes  can be viewed as punctured subcodes of Reed-Muller codes, uniting the two constructions. This point of view provides natural structures for local recovery with availability at each level in the hierarchy. }

\keywords{locality, hierarchy, availability, fiber product codes, Reed-Muller codes}

\pacs[MSC Classification]{94B27, 11T71, 14H05}

\maketitle

\section{Introduction} \label{Introduction}

Error-correcting codes provide ways of encoding information into vectors for storage or communication with redundancy included, so that errors can be detected and recovered by the retriever or recipient.  Not all errors are created equal, however, nor are all errors equally likely to occur across different applications.  For example, symbol erasure (where a vector symbol is erased, leaving a blank coordinate) is easier to detect (and potentially to correct) than symbol changing, because the existence and position of the erasure is known. In cloud storage, a regular event and major concern is that one or more servers will fail catastrophically or be overloaded with requests for data so that they are unable to satisfy additional queries, effectively erasing all data on the server.  Locally recoverable codes are motivated by the desire to facilitate erasure recovery in this setting. The idea is to determine a small set of helper positions for each position $i$ so that  position $i$ may be recovered using only the helper set.  A linear code is said to have locality $r$ if for each coordinate $i$ of a codeword, there is a set of $r$ other helper coordinates, called a recovery or helper set, so that in any codeword, the symbol in position $i$ can be recovered from the symbols in the helper set.

Two structural generalizations that handle multiple simultaneous erasures are availability and hierarchy \cite{Wang_Zhang, barg2017locally, sasidharan2015codes}. A locally recoverable code is said to have availability $t$, referred to as an LRC($t$), if each coordinate has $t$ independent recovery sets.  A locally recoverable code is said to have hierarchical recovery and is referred to as an H-LRC if the recovery set for each coordinate is contained in a larger recovery set that can recover additional erasures beyond what the smaller recovery set can, as depicted in Figure \ref{fig:hierarchy}.
\begin{figure}
    \centering
    \includegraphics[width=2.5in]{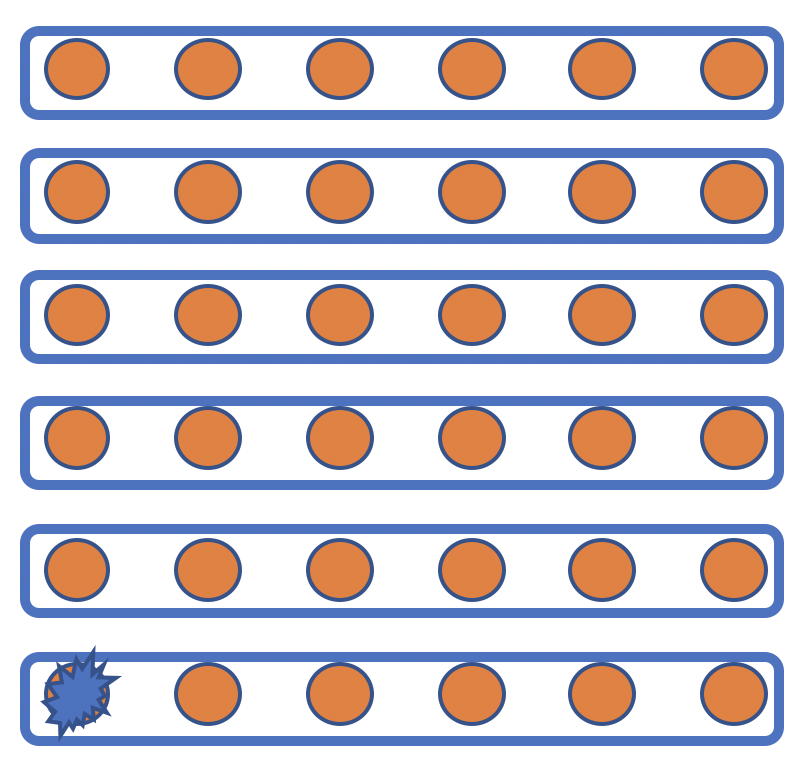}\includegraphics[width=2.5in]{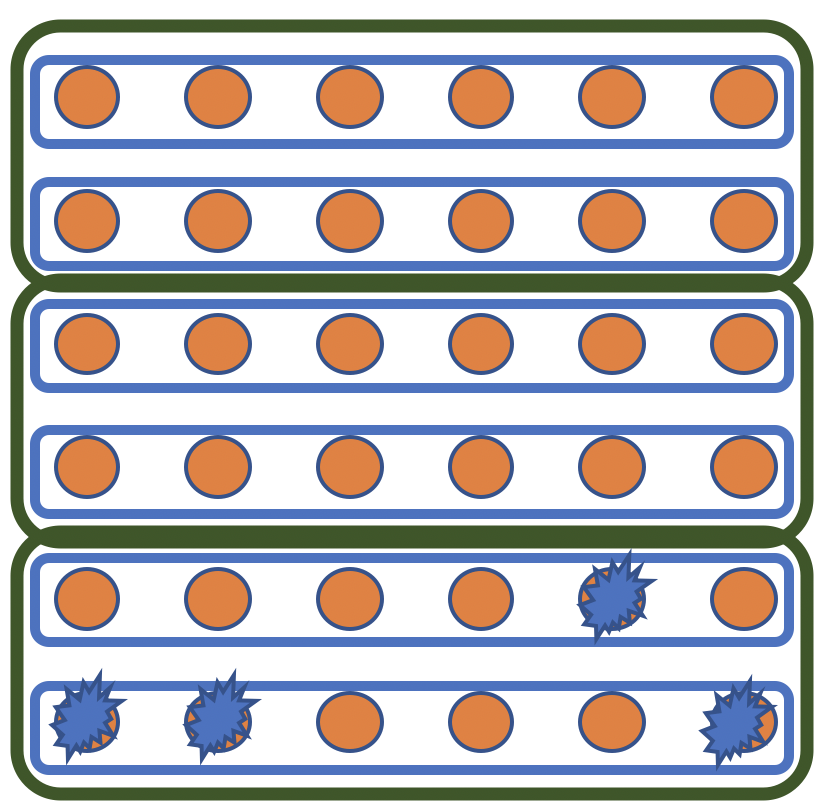}
    \caption{This is a cartoon illustrating a code with 2-level hierarchy.  Small repair groups of size 6 can recover a single erasure, while larger repair groups of size 12 can recover up to 4 erasures.}
    \label{fig:hierarchy}
\end{figure}
This feature can be generalized to several levels of hierarchy.  These two structural generalizations can be combined, giving rise to hierarchical LRC($t$)s. Aside from \cite{ballentine2019codes}, combined hierarchy and availability is a relatively unstudied area, yet many geometric code constructions naturally possess both availability and hierarchy.  Furthermore, geometric constructions often offer great flexibility, allowing hierarchical recovery in many different ways.  In this paper, we examine how hierarchy and availability arise naturally from underlying structure in two related types of geometric codes: Reed-Muller codes and codes from fiber products. 
In \cite{haymaker2018locally}, the authors construct LRC($t)$s from fiber products of $t$ curves.  
We relate these fiber product codes to Reed-Muller codes and 
demonstrate how to perform hierarchical recovery for all such codes. 
Though Reed-Muller codes have been well-studied, and the hierarchy arises quite naturally, this approach to hierarchical locality for Reed-Muller codes has not appeared before in the literature.

There has been extensive and interesting work on codes that are optimal with respect to availability bounds.  In one fairly recent example, the authors of \cite{cai2019optimal} construct optimal codes with availability capable of repairing $\delta-1$ erased symbols in the case where each repair set contains one check symbol. 
In \cite{cai2019on-optimal} the authors prove a new bound for locally repairable codes with multiple disjoint repair sets (availability) and present an optimal construction of codes that meet the bound. 
A class of locally recoverable codes with availability that meets the generalized Singleton bound for LRCs is presented in  \cite{garrison2023class}.  
In \cite{silberstein2018anticode}, new classes of LRCs with small  locality ($r=2$ or 3) and high  availability  
are presented, and for many cases, the authors show the codes are alphabet-optimal or Greismer-optimal.

There has also been significant work on H-LRCs, including the following examples, which do not include  availability at each level. 
The authors of \cite{chen2022new} use matrix product codes, concatenated codes, and cyclic codes to construct optimal H-LRCs with lengths either $q^2+q$, $n$ dividing $q-1$, or unbounded. 
In \cite{dukes2023optimal}, the authors give an improved bound for H-LRCs for a special set of parameters and present optimal constructions that meet this bound for those parameters. 
Optimal H-LRCs via generalized Reed-Solomon codes are constructed in \cite{zhang2020constructions}. 
The complete hierarchical locality of the punctured Simplex codes is given in \cite{grezet2021complete}. Note that the codes discussed in \cite{grezet2021complete} also have availability but combined hierarchy and availability is not the topic of that paper.

There are few examples in the literature of LRCs with combined hierarchy and availability.  
 Constructions in \cite{ballentine2019codes} of codes with $H=2$ and availability at each level are given in subsections VIII.D and VIII.E. The difference between the codes in \cite{ballentine2019codes} and the ones here are highlighted in later remarks. 
In \cite{freij2016locally}, the authors give an overview of the connections between matroid theory and linear LRCs with hierarchical availability via examples. They conclude that a level-specific Singleton-type bound is needed, and that remains an open question in this area.

The general constructions in this paper give multiple choices of recovery set at each level of hierarchy (and typically high-availability, all arising naturally from geometry).   A major goal of this paper is to begin the exploration and discussion of codes with structure beyond what has been discussed in the literature--in particular, codes that have availability, hierarchy, and many choices of hierarchical structure (this could be thought of as ``availability of hierarchy"). Instead of devising new codes that have this structure, a major contribution of this paper is to describe how these properties arise naturally in existing, well-studied codes.  The codes in this paper are generally not optimal with respect to existing bounds for codes with availability or hierarchy (e.g. in \cite{ballentine2019codes, sasidharan2015codes})   
 because these bounds do not take into account the additional structure we describe.   
 However, we show that these generally non-optimal codes offer intrinsic versatility in recovery beyond that offered by codes optimal with respect to bounds for either availability or hierarchy. It remains an open question to establish bounds that fully capture this level of flexibility in recovery.

An important contribution of this work is the illumination of nested affine subspaces within existing code families---specifically Reed-Muller codes and codes from fiber products of curves---that can be used for the  purpose of natural hierarchical recovery with high availability. This work is the first time that the combined properties of hierarchy and availability are being exploited in full generality in these code families. Moreover, since many algebraic geometry codes are punctured Reed-Muller codes, the ideas here can be generalized to other structures, resulting in more flexible recovery methods for other existing code families.

This paper is organized as follows. Coding theory preliminaries are contained in Section \ref{Background}. The main families of codes considered in this paper, Reed-Muller codes and fiber product codes, are reviewed here. 
Section \ref{sec:geom_code_intro} contains results on their localities and availability  \cite{haymaker2018locally},  connects these codes concretely and conceptually, and  describes how to obtain basic (2-level) hierarchy from each of these constructions.  Section \ref{sec:hlevelH-LRC} extends this construction to $H$-level hierarchy and demonstrates that these constructions also inherently have availability at each level of recovery.  Section \ref{section:hierarchy and availabililty} uncovers the combined availability and hierarchy inherent in the geometry of Reed-Muller and fiber product codes. The conclusion is found in Section \ref{conclusion}.

\section{Locality, availability, and code constructions} 
\label{Background} 

In this section, we review the notions of locality and availability and set notation to be used throughout the paper. This section contains definitions of terminology of algebraic curves. We also give an overview of the main code constructions that will be needed in later sections. 

\subsection{Background and notation}\label{subsec:background}
We use standard notation from coding theory. The finite field with $q$ elements is denoted by $\F_q$. An $[n,k,d]$ code $C$ over $\F_q$ is a $k$-dimensional $\F_q$-subspace of $\F_{q}^n$ in which any elements differ in at least $d$ coordinates. The set of positive integers is denoted $\z^+$. Given an $[n,k,d]$ code $C$ and $I:= \{i_1,\ldots,i_s\} \subseteq \left\{ 1, \dots, n \right\}$, the corresponding {punctured code} is $\left. C\right|_I := \{(c_{i_1},\ldots, c_{i_s}) : (c_1,\ldots,c_n) \in C\}.$ 

The notion of locality was introduced by Gopalan, Huang, Semitci, and Yekhanin in 2012 \cite{gopalan}. It was generalized to consider multiple failures 
shortly thereafter, as captured in the definition below.

\begin{definition} \cite{Prakash}\label{LRCdef2} Let $n, r, \rho \in \z^+$. A linear code $C$ of length $n$ over $\mathbb{F}_q$ is \textbf{$(r,\rho)$-locally recoverable} if for each $i\in \left\{ 1, \dots, n \right\}$ there exists a punctured code $C_i$ whose support contains $i$, of length at most $r+\rho-1$ and minimum distance at least $\rho$. The support of $C_i$ is called a \textbf{repair group} for position $i$, and $I_i:=\supp C_i \setminus \{i\}$ is called a \textbf{recovery set} for position $i$. We sometimes say that $C$ has $(r, \rho)$ locality.
\end{definition}

Notice that $\supp C_i$ is a repair group for all $j \in \supp C_i$ not just $i$ itself.

Given any $c \in C$ where $C$ has locality  $(r,\rho)$, the symbol $c_i$ in position $i$ may be recovered using at most $r$ symbols indexed by elements of $I_i$, assuming no more than $\rho-1$ erasures have occurred. If $r$ symbols are not available due to erasures, the codeword symbols indexed by $\left\{ 1, \dots, n \right\} \setminus \{i\}$ may be used to recover $c_i$.  This idea could be seen as foreshadowing the notion of hierarchical recovery: if too many erasures prevent recovery by a smaller set ($I_i$ in this case), recovery is attempted using the larger set $\left\{ 1, \dots, n \right\} \setminus \{ i \}$. Of course, we are interested in the setting in which a there is another set slightly larger than $I_i$ that can be used for recovery rather than immediately considering all remaining coordinates indexed by $\left\{ 1, \dots, n \right\} \setminus \{ i \}$. This is known as hierarchical locality, and the formal definition will be given in Section \ref{fiber_product_construction_section}.

If too many positions within a recovery set become unavailable, local recovery may not be possible. This leads to what is known as the availability problem.  One way to address this problem is to construct multiple recovery sets for each position of the codeword. A code is said to have availability $t$ if each coordinate has $t$ independent recovery sets. The concept of availability was introduced in \cite{Wang_Zhang}. More formally, we have the following definition provided by Ballentine, Barg, and Vl\u{a}dut. 

\begin{definition}  \cite{ballentine2019codes} \label{LRCtdef2}
Let $n, t, r_j, \rho_j \in \z^+$ for all $j\in\left\{ 1, \dots, t \right\}$. A linear code of length $n$ over $\mathbb{F}_q$ is \textbf{ $(r_j,\rho_j)_{1\leq j\leq t}$-locally recoverable with availability $t$} if, for each $i\in\left\{ 1, \dots, n \right\}$ and each $j\in\left\{ 1, \dots, t \right\}$, there exists a punctured code $C_{i,j}$ with support positions $\{i\}\cup I_{i,j}$ such that dim$(C_{i,j})\leq r_j$, there are at least dim$(C_{i,j})$ linearly independent positions in $I_{i,j}\setminus \cup_{k\in[t]\setminus\{j\}} I_{i,k}$, and $d(C_{i,j})\geq \rho_j$; that is, $C_{i,j}$ is a $(r_j, \rho_j)$-locally recoverable code.  The set $I_{i,j}$ is known as the $j$-th recovery set for position $i$.
\end{definition}

The following definitions related to algebraic curves over finite fields are from Chapter 13 of \cite{huffman2010fundamentals}, and we refer the reader to that reference as well as \cite{hoholdt1998algebraic, tsfasman2007algebraic} for additional details and examples.  For a thorough treatment of algebraic curves and their associated function fields, please see Appendix B of \cite{stichtenoth2009algebraic}.

Let $\mathbb{F}$ be a field. The \textit{$n$-dimensional affine space over} $\mathbb{F}$, sometimes denoted $\mathbb{A}^n(\F)$, is the $n$-dimensional vector space $\mathbb{F}^n$, where the elements of $\mathbb{F}^n$ are called \textit{points}. The \textit{$n$-dimensional projective space} $\mathbb{P}^n(\mathbb{F})$ is the set of all equivalence classes
\[\left(\{[x_1:x_2:\dots :x_{n+1}]:x_i\in \F \textrm{ for all }i\}\setminus\{[0:0:\dots :0]\}\right)/ \sim,
\]
where  $[x_1, x_2, \ldots, x_{n+1}]\sim [y_1, y_2, \ldots, y_{n+1}]$ if and only if $x_i=\lambda y_i$ for all $i$, where $\lambda \in \mathbb{F}\setminus\{0\}$. We generally call $P=[x_1: x_2: \cdots : x_{n+1}]\in \mathbb{P}^n(\mathbb{F})$ a \textit{point at infinity} if $x_{n+1}=0$. \textit{Affine points} in $\mathbb{P}^n(\mathbb{F})$ are those points not at infinity. Each affine point in $\mathbb{P}^n(\mathbb{F})$ can be uniquely represented as $[x_1:x_2: \cdots: x_n:1]$, giving a natural identification with the points of $\mathbb{F}^n$.  The \textit{projective line over} $\mathbb{F}$ is $\mathbb{P}^1(\mathbb{F})$ and the \textit{projective plane} over $\mathbb{F}$ is $\mathbb{P}^2(\mathbb{F})$. Let $X_1, X_2, \ldots, X_n$ be indeterminates. A polynomial $f$ in $\F[X_1, X_2, \ldots, X_n]$ is \textit{homogeneous of degree $d$} if every term of $f$ is of degree $d$. A polynomial can be made homogeneous by introducing a new variable $X_{n+1}$ and multiplying each term by an appropriate power of $X_{n+1}$ to make all terms have degree $d$. We denote by $f^H$ the homogenized polynomial of $f$. This process gives a one-to-one correspondence between polynomials in $n$ variables of degree at most $d$ and homogeneous polynomials of degree $d$ in $n+1$ variables. 
The $\F$-points of an \textit{affine plane curve} $\X_f$ defined by $f(x,y)\in \F[X,Y]$ a non-constant polynomial is 
\[\X_f(\F)=\{(x,y)\in \mathbb{F}^2: f(x,y)=0\}.\] 
The associated function field $\F(\X_f)$ is the field of rational functions $\F(x,y)$, where the relation between $x$ and $y$ is given by $f(x,y)=0$.
A \textit{projective plane curve} $\X_f$ is the set of projective points $[x:y:z]\in \mathbb{P}^2(\F)$ such that $f(x,y,z)=0$, where $f(x,y,z)$ is a non-constant homogeneous polynomial. If a projective curve is defined by a polynomial $f^H$, it is called the \textit{projective closure} of $\X_f(\F)$.  
Curves defined over $\F$ can also have points over an extension field $\mathbb{E}$, in which case the notation $\X_f(\mathbb{E})$ is used. 
The partial derivative of a function is defined as in calculus, and a point $(y_1, y_2, \ldots, y_n)$ 
on an affine (resp., projective) curve $\X_f(\F)$ is called \textit{singular} if every partial derivative of $f$ equals 0 at the point. A curve that has no singular points is called \textit{smooth}. 

To define an evaluation code with evaluation points coming from a curve, we require a vector space of functions that can be evaluated at the points. Let $p$ be a homogeneous polynomial of positive degree that defines a projective plane curve $\X_p$ over $\F$. Define the \textit{field of rational functions on} $\X_p$ over $\F$, denoted by $\F(\X_p)$, as the collection of equivalence classes of rational functions $\frac{g}{h}$, where $g$ and $h$ are homogeneous of equal degree, with $p\nmid h$, and where $\frac{g}{h}\sim \frac{g+ap}{h+bp}$ for any $a,b$ homogeneous polynomials of appropriate degree.

 In \cite{barg2015locally}, the authors construct locally recoverable codes with availability $t=2$ based on fiber products of curves and propose a group-theoretic perspective on the construction, whereby a curve can sometimes be expressed as a fiber product of its quotient curves by certain subgroups of the automorphism group of the curve; see \cite{barg2017locally} for an extended version. In \cite{haymaker2018locally}, the authors give a closely related construction of codes with availability  $t$ recovery sets for any $t\geq 2$ based on fiber products of $t$ curves, including a different method of designing the LRC($t$)s, bounding the parameters for these codes, and constructing examples based on the generalized Giulietti-Korchmaros curves, the Suzuki curves, and fiber products of Artin-Schreier curves proposed by van der Geer and van der Vlugt.  For an additional reference on  codes from fiber products of curves, we recommend \cite{chara2023minimum}.
 In the next subsection, we review these constructions in the context of the framework above in preparation for taking a hierarchical perspective.

\subsection{Reed-Muller and fiber product codes}
In this section, we introduce Reed-Muller codes and codes from fiber products of curves, then describe a framework that unites these codes as locally recoverable codes with availability. In the next section, we will see that it unifies the hierarchical recovery structure of these two families as well.

Recall that Reed-Muller codes are evaluation codes, defined as follows. 

\begin{definition}
The \textbf{($q$-ary) Reed-Muller code} is a linear code over $\mathbb{F}_q$ formed by evaluations of polynomials in $\F_q[x_1, \ldots, x_m]$ of total degree at most $v$ at points in $\F_q^m$, and is denoted by $\mathcal{RM}_q(v,m)$; that is, 
$$
\mathcal{RM}_q(v,m)= \left\{ \left( f(P_1), \dots, f(P_n) \right): f \in \F_q[x_1, \ldots, x_m]_{\leq v} \right\}
$$
where $n=q^m$ and $\F_{q}^m = \left\{ P_1, \dots, P_n \right\}$.
\end{definition} 

In thinking about the geometry of $\F_{q}^m$, we will make use of linear subspaces and their translations.  The following definition makes these objects precise.

\begin{definition}
    Viewing $V=\mathbb{F}_q^m$ as a vector space over $\mathbb{F}_q$, define an \textbf{affine subspace} $A$ of $V$ to be a coset of a subspace  $S$ of $V$, 
    \[ A=\vec{v}+S.\]
    When $S$ has dimension $r$, the set $A$ is also called an \textbf{$r$-flat}. 
\end{definition}

Local properties of Reed-Muller codes were considered by Yekhanin in \cite{yekhanin2012locally} for the purpose of local decodability.  If $v< q-1$, the code $\mathcal{RM}_q(v,m)$ has local recovery with $r=q-1$ and availability $\frac{q^m-1}{q-1}$. To see this, note that the value of any polynomial of total degree at most $v$ at a point $P_i$ in $\F_q^m$ can be determined by the value of that polynomial on the other $q-1$ points on any of the $\frac{q^m-1}{q-1}$ distinct $\mathbb{F}_{q}$-rational lines through $P_i$. It is important to note that because two lines intersect in at most one point, and in this case that point is $P_i$, these lines with $P_i$ removed satisfy the disjoint repair group property. In this way, we see the inherent geometric structure giving rise to LRCs with availability. 

Now, we turn our attention to another code design in which the underlying geometry gives rise to availability. In what follows, we build on Section \ref{subsec:background}, but do use some more general and technical definitions.  These objects are defined carefully in the references \cite{stichtenoth2009algebraic, huffman2010fundamentals, hoholdt1998algebraic, tsfasman2007algebraic}.  However, we will attempt to provide some intuition on the more general and technical terms as we proceed so that the main ideas may be clearer to the reader. Further, the codes considered here are  defined in full generality in \cite{haymaker2018locally} and discussed at length in \cite{chara2023minimum}, where space is devoted to developing a more concrete view. 

Intuitively, an algebraic curve is the one-dimensional vanishing set of some polynomials defined over a given field.  The curve is smooth if it does not have any self-crossings or cusps, i.e. it has a one-dimensional tangent space at each point.  In this context, projective means that we include any ``points at infinity" on each curve, which can be made algebraically precise by homogenizing the defining equations for the curve and looking for solutions in projective space. A rational map of curves is a map defined by rational functions.  Let $ \mathcal{Y}, \mathcal{Y}_1, \mathcal{Y}_2, \ldots, \mathcal{Y}_t$ be smooth, projective algebraic curves over $\mathbb{F}_q$ each with a rational, separable map $h_j: \mathcal{Y}_j \rightarrow \mathcal{Y}$ of degree $d_{h_j}$ for each $j\in\{1,2,\dots t\}$.  We note that separability is a more technical concept, but that a map is certainly separable if the characteristic of $\F_q$ does not divide the degree of the map.

Each projective algebraic curve $\X$ has an associated field of rational functions defined over $\F_q$, denoted $\F_q(\X)$.  We assume two further technical conditions on our curves, namely that $\F_q$ is the full field of constants within each of the associated function fields $\F_q(\mathcal{Y}_i)$, and that the extensions $\F_q(\mathcal{Y}_i)/\F_q(\mathcal{Y})$ are linearly disjoint (see \cite{chara2023minimum} for more on this). Let $\mathcal{X}$ be the fiber product $\mathcal{Y}_1\times_{\mathcal{Y}}\mathcal{Y}_2\times_{\mathcal{Y}}\dots \times_{\mathcal{Y}}\mathcal{Y}_t$.  For a full discussion of fiber products, see \cite{Gortz2010}. Intuitively, the points of $\X$ can be thought of as $t-$tuples of points, one from each of the curves $\Y_i$, where each point in the tuple maps to the same point on $\Y$.  That is, points of $\X$ are of the form $(P_1,P_2,\dots, P_t)$ where $h_1(P_i)=h_2(P_2)=\dots=h_t(P_t)$. Though it is not immediately obvious, this construction results in a curve $\X$. Let the natural projection map $g_j: \mathcal{X} \rightarrow  \mathcal{Y}_j$ have degree $d_{g_j}$ for  each $j$, $1 \leq j \leq t$, and define the rational, separable map $g=g_j\circ h_j: \mathcal{X} \rightarrow \mathcal{Y}$ (for any $j$) of degree $d_g$. For each $j$, define the curve \[\tilde{\mathcal{Y}}_j= \mathcal{Y}_1\times_{\mathcal{Y}} \cdots \times_{\mathcal{Y}}\mathcal{Y}_{j-1}\times_{\mathcal{Y}}\mathcal{Y}_{j+1}\times_{\mathcal{Y}}\cdots \times_{\mathcal{Y}} \mathcal{Y}_t.\]
More generally (for later use) we define $\tilde{\Y}_{A}$ for any set $A\subseteq\left\{ 1, \dots, t \right\}$ to be the fiber product over $\Y$ of all $\Y_i$ with $i\not\in A$. Then $\mathcal{X}=\mathcal{Y}_j\times_{\mathcal{Y}}\tilde{\mathcal{Y}}_j$. Denote the associated natural maps by
\[\tilde{g}_j: \mathcal{X} \rightarrow \tilde{\mathcal{Y}}_j \hspace{.25in} \textrm{ and } \hspace{.25in} \tilde{h}_j:\tilde{\mathcal{Y}}_j\rightarrow \mathcal{Y}.\] The degree of $\tilde{g}_j$ must be equal to $d_{h_j}$.   Intuitively, the map $\tilde{g}_j$ ``forgets" the information coming from the curve $\Y_j$ while retaining the data of the fiber product that come from the other factors, meaning the curves $\mathcal Y_i$ with $i \neq j$.

The function field $\F_q(\X)$ is isomorphic to the compositum of the function fields $\F_q(\Y_i)$, where the function field $\F_q(\Y)$ is embedded into each $\F_q(\Y_i)$ as induced by the map $h_i$.    
For ease of exposition, we identify each function field with its image inside $\F_q(\X)$, so for each $i$, $$\F_q(\Y)\subseteq \F_q(\Y_i)\subseteq \F_q(\X).$$ 

This framework sets the stage for a code design reminiscent of algebraic geometry codes, though harnessing the fiber product structure to facilitate local recovery with high (or bespoke) availability.  Let $y_j \in \F_q(\mathcal{Y}_j)$ be a primitive element of $\F_q(\mathcal{Y}_j)/\F_q(\mathcal{Y})$ where $y_j$ is the root of a degree $d_{h_j}$ polynomial with coefficients in  $\F_q(\mathcal{Y})$.  This yields \[\F_q(\mathcal{Y})(y_1,y_2,\dots,y_t)\cong\F_q(\mathcal{X}).\]

A divisor on a curve is a formal sum of points on the curve.  Let $D_j$ be the principal divisor of the function $y_j\in \F_q(\mathcal{X})$.  Let $D_j=D_{j,+}-D_{j,-}$, where  $D_{j,+}=(y_j)_0$ is the zero divisor of $y_j$, and $D_{j,-}=(y_j)_{\infty}$ is the pole divisor of $y_j$.  Let $d_{y_j}$ be the degree of $y_j:\mathcal{Y}_j\rightarrow\mathbb{P}^1$. Then, if $y_j$ is viewed as a function $y_j:\mathcal{X}\rightarrow \mathbb{P}^1$, its degree is $d_{g_j}d_{y_j}$, which is equal to $\textrm{deg}(D_{j,-})$.  
Now choose simultaneously a divisor $D$ on $\mathcal{Y}(\F_q)$ and a set $S$ of points in $\mathcal{Y}(\F_q)$ as follows. Let $\tilde{D}=\sum_{j=1}^t g(D_{j,-})$, so $\supp(\tilde{D})$ consists of all points of $\mathcal{Y}$ which, for some $j$, are the image under $g$ of a point on $\mathcal{X}$ at which the function $y_j$ has a pole.  Then choose $D$ an effective divisor on $\mathcal{Y}(\F_q)$ of degree $\deg(D)=l$ and $S=\{P_1, \dots, P_s\} \subseteq \mathcal{Y}(\F_q)$ so that the following conditions are satisfied: 
\begin{itemize} 
 \item $|g^{-1}(P_i)\cap\mathcal{X}(\F_q)|=d_g,$ for all $i\in[s]$, 
 \item $S \cap \supp (\tilde{D}) = \emptyset$,  
 \item $S\cap \supp(D) = \emptyset$, 
 \item  $l< s$. 
 \end{itemize} 

Let $\mathcal{L}(\Y,D)$ be the Riemann-Roch space of the divisor $D$ on the curve $\Y$.  See \cite{stichtenoth2009algebraic} for a discussion of Riemann-Roch spaces and their properties.  Let $\ell(D)$ be the dimension of $\mathcal{L}(\Y,D)$ as an $\F_q$-vector space. Let $\{f_1, f_2,\dots, f_m\}$ be a basis for $\mathcal{L}(\Y,D)$, so $m=\ell(D)$. These functions are naturally in $\F(\Y)$, and we also consider them to be functions in $\F(\X)$ through the natural inclusion.  Note that since $l<s$, each non-zero function in $\mathcal{L}(\Y, D)$ will be non-zero when evaluated at some point in $S$. 
Then set 
$$B=g^{-1}(S)$$
so that $n:=|B|=d_gs$. Order the points in $B$ and denote them as $\{Q_1, Q_2, \ldots, Q_n\}$. Let 
\[V=\textrm{Span}\left\{ f_ky_1^{{e_1}}\cdots {y_t}^{e_t}: \begin{array}{l} e_i\in\mathbb{Z}, \\ 0\leq e_j\leq d_{h_j}-\rho_j \forall j\in \left\{ 1, \dots, t \right\}, k\in [m] \end{array} \right\}. \] 
Define
$C(D,B):=Im (ev_B)$ where 
$$
\begin{array}{llll}
ev_B: &V &\rightarrow &\F_q^{n} \\
&f& \mapsto & \left( f \left( Q_i\right) \right)_{i\in \left\{ 1, \dots, n \right\}}.
\end{array}
$$The following is a restatement of results from \cite{haymaker2018locally}, using the above terminology, in preparation for an investigation of hierarchy.
\begin{theorem}[\cite{haymaker2018locally}]\label{LRCtheorem}
Let curves $\{\mathcal{Y}_j\}_{j\in\left\{ 1, \dots, t \right\}}$, $\mathcal{Y}$, maps $\{h_j:\mathcal{Y}_j \rightarrow\mathcal{Y}\}_{j\in\left\{ 1, \dots, t \right\}}$, a divisor $D$ on $\mathcal{Y}(\F_q)$, and sets $S\subseteq\mathcal{Y}(\F_q)$ and $B=g^{-1}(S)$ be all as described above, where $l=\textrm{deg}(D)\leq |S|$ and the quantity $d$ below is positive.  For $i\in\{1,2,\dots, n\}$ and $j\in\{1,2,\dots, t\}$, set $C_{i,j}$ to be the punctured code within $C(D,B)$ with support $\{k:Q_k\in \tilde{g}_j^{-1}(\tilde{g}_j(Q_i))\}$. Then the code $C(D,B)$ is an LRC($t$) with 
\begin{itemize}
\item length $n=|B|=d_g|S|$, 
\item dimension $k=\ell(D)(d_{h_1}-\rho_1+1)(d_{h_2}-\rho_2+1) \cdots (d_{h_t}-\rho_t+1) $
\item minimum distance $d \geq n-ld_g-\sum_{j=1}^t\left( d_{h_j}-\rho_i\right)\left(d_{g_j}d_{y_j}\right)$, and 
\item locality $r_j=d_{h_j}-\rho_j+1$ for $1\leq j\leq t$.
\end{itemize}

\end{theorem}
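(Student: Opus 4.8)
The plan is to treat $C(D,B)$ as an evaluation code on the fiber product $\mathcal{X}$ and to read off all four properties from the divisor-theoretic behaviour of the functions in $V$ together with the combinatorics of the fibers of the forgetful maps $\tilde{g}_j$. The length is immediate: every $P_i\in S$ satisfies $|g^{-1}(P_i)\cap\mathcal{X}(\F_q)|=d_g$ and $B=g^{-1}(S)$, so $n=|B|=d_g|S|$; moreover the hypotheses $S\cap\supp D=\emptyset$ and $S\cap\supp\tilde{D}=\emptyset$ guarantee that no $f_k$ and no $y_j$ has a pole on $B$, so $ev_B$ is well defined on $V$ and sends each point of $B$ at which a function vanishes to a genuine zero of its divisor.

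Next I would establish the dimension and the distance bound together. Linear disjointness of the extensions $\F_q(\mathcal{Y}_i)/\F_q(\mathcal{Y})$ makes $\{y_1^{e_1}\cdots y_t^{e_t}: 0\le e_j\le d_{h_j}-1\}$ a basis of $\F_q(\mathcal{X})$ over $\F_q(\mathcal{Y})$, and multiplying by the basis $\{f_1,\dots,f_m\}$ of $\mathcal{L}(\mathcal{Y},D)$ shows the $m\prod_j(d_{h_j}-\rho_j+1)$ generators of $V$ are $\F_q$-linearly independent, so $\dim V=\ell(D)\prod_j(d_{h_j}-\rho_j+1)$. To see $ev_B$ is injective (hence $k=\dim V$) and to bound $d$, take a nonzero $f=\sum a_{k,\vec{e}}f_ky_1^{e_1}\cdots y_t^{e_t}\in V$ and bound its pole divisor on $\mathcal{X}$: since $D$ is effective, the poles of each $f_k$ on $\mathcal{X}$ are bounded by the pullback $g^{*}D$ (of degree $d_gl$), and the pole divisor of each $y_j$ on $\mathcal{X}$ has degree $d_{g_j}d_{y_j}$, so $(f)_\infty\le g^{*}D+\sum_j(d_{h_j}-\rho_j)(y_j)_\infty=:E$ with $\deg E=d_gl+\sum_j(d_{h_j}-\rho_j)d_{g_j}d_{y_j}$. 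A nonzero rational function on the smooth projective curve $\mathcal{X}$ has as many zeros as poles counted with multiplicity (see \cite{stichtenoth2009algebraic}), so $f$ vanishes at most $\deg E$ times on $\mathcal{X}$, hence at most $\deg E$ of the $n$ points of $B$; therefore the weight of $ev_B(f)$ is at least $n-\deg E=d$, and since $d>0$ this forces $ev_B(f)\neq 0$. The hard part is precisely this pole bookkeeping: one must pull poles back along $g$ and the $g_j$ correctly, check $B\cap\supp E=\emptyset$, and confirm $\deg g^{*}D=d_gl$ and $\deg(y_j)_\infty^{\mathcal{X}}=d_{g_j}d_{y_j}$; everything else is counting.

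For locality I would analyze the punctured codes $C_{i,j}$. Writing $Q_i=(P_1,\dots,P_t)$ with common image $P=g(Q_i)\in S$, the fiber $\tilde{g}_j^{-1}(\tilde{g}_j(Q_i))$ is identified inside $\mathcal{X}$ with $h_j^{-1}(P)$; complete splitting of $P$ in $\mathcal{X}$ forces $P$ to split completely and unramified in $\mathcal{Y}_j$ as well, so this fiber consists of exactly $d_{h_j}$ distinct $\F_q$-rational points, all lying in $B$, on which the primitive element $y_j$ takes $d_{h_j}$ distinct values. On this fiber every $f_k$ and every $y_{j'}$ with $j'\neq j$ is constant, so the restriction of any element of $V$ is a polynomial in $y_j$ of degree at most $d_{h_j}-\rho_j$; hence $C_{i,j}$ is a subcode of a generalized Reed-Solomon code of length $d_{h_j}$, with $\dim C_{i,j}\le d_{h_j}-\rho_j+1=r_j$ and minimum distance at least $\rho_j$, which is exactly the $(r_j,\rho_j)$-locality of Definition~\ref{LRCdef2}.

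Finally, for the availability assertion I would verify Definition~\ref{LRCtdef2}. A point lying in both $\tilde{g}_j^{-1}(\tilde{g}_j(Q_i))$ and $\tilde{g}_{j'}^{-1}(\tilde{g}_{j'}(Q_i))$ for $j\neq j'$ agrees with $Q_i$ in every coordinate and hence equals $Q_i$, so the repair groups pairwise intersect only in $\{i\}$ and the recovery sets $I_{i,j}$ are pairwise disjoint. Each $I_{i,j}$ then has $d_{h_j}-1$ elements, and (assuming each $\rho_j\ge 2$, the only regime in which local recovery is meaningful) $d_{h_j}-1\ge\dim C_{i,j}$; since puncturing a Reed-Solomon code at fewer than $d$ positions preserves its dimension, $C_{i,j}$ restricted to $I_{i,j}$ still has dimension $\dim C_{i,j}$, so $I_{i,j}\setminus\bigcup_{k\neq j}I_{i,k}=I_{i,j}$ contains $\dim C_{i,j}$ linearly independent positions. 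This completes the proof, with the pole-divisor estimate of the second paragraph being the only genuinely delicate point.
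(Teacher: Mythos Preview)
The paper does not supply its own proof of this theorem; it is presented as a restatement of results from \cite{haymaker2018locally} and used as background, so there is no in-paper argument to compare against. Your proof is correct and follows exactly the standard line one would expect (and which the paper sketches informally just before Theorem~\ref{H-LRCTheorem}): length from the fully-split-fiber hypothesis, dimension and the distance bound from the pole-divisor estimate $(f)_\infty\le g^{*}D+\sum_j(d_{h_j}-\rho_j)(y_j)_\infty$ together with the zeros-equals-poles principle on $\mathcal{X}$, and locality/availability from the fact that on each $\tilde g_j$-fiber every element of $V$ restricts to a polynomial in $y_j$ of degree at most $d_{h_j}-\rho_j$, so $C_{i,j}$ sits inside a punctured Reed--Solomon code of length $d_{h_j}$.
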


Next, we provide a unified framework that will aid the investigation of hierarchical locally recoverable codes from Reed-Muller and fiber product codes.

\subsection{Relating fiber product codes and Reed-Muller codes}

Locally recoverable codes with many recovery sets arising from fiber product constructions were presented in great generality in \cite{haymaker2018locally}.  Though this generality is valuable, many examples of these codes can be understood much more simply.  In these cases, the availability and hierarchical recovery scheme that will be outlined is essentially the same as the natural geometric availability and hierarchical recovery scheme that we will present  for generalized Reed-Muller codes. We will see that this perspective orients fiber product codes in the larger landscape of techniques to increase the rate of Reed-Muller codes and punctured Reed-Muller codes while still retaining availability.

Let all notation be as in Theorem \ref{LRCtheorem}.  Assume that $$\mathcal{L}(\mathcal{Y},D) \subseteq \mathbb{F}_q[z_1,z_2,\dots, z_{\mu}]_{\leq \psi}$$ so that  
 the maximum total degree of all functions in $\mathcal{L}(\mathcal{Y},D)$ is at most $\psi$. Then take
\[V \subseteq\left< z_1^{a_1}\cdots z_{\mu}^{a_{\mu}}y_1^{{e_1}}\cdots {y_t}^{e_t}: 
\begin{array}{l} a_i, e_i\in\mathbb{Z}, 0\leq e_j\leq d_{h_j}-\rho_j \forall j\in \left\{ 1, \dots, t \right\},\\ a_1+\dots+a_{\mu} \leq \psi \end{array} \right>. \]The curve $\mathcal{X}$ abstractly exists in a space which is the product of many projective spaces. Consider a collection $S$ of affine $\F_q$-rational points on $\mathcal Y$, meaning $S\subseteq \mathcal Y ( \F_q)$ may be thought of as $$S \subseteq \F_q^{\mu}.$$  Take $B=g^{-1}(S) \subseteq \mathcal X(\F_q)$ which we identify with points in $\mathbb{F}_q^{\mu+t}$ so that the  
evaluation points 
for the code $C(B,D)$ satisfy $$B \subseteq \mathbb{F}_q^{\mu+t};$$ more precisely, the function $$\phi: \mathcal X \rightarrow \mathbb{F}_q^{\mu+t}$$ gives 
$$(z_1,z_2,\dots,z_{\mu}, y_1,y_2, \dots, y_t) \mapsto \mathbb{F}_q^{\mu+t}.$$The codewords of $C(B,D)$ are then the evaluations of polynomials in $\mathbb{F}_q[z_1,z_2,\dots,z_{\mu}, y_1,y_2, \dots, y_t]$ on points $S$ of the fiber product curve $\mathcal{X}$.  Thus the fiber product code $C(D,B)$ is a subcode of the Reed-Muller code $RM_q(r,\mu+t)$ for some $r$, punctured to the set $\phi(g^{-1}(S))$ where $r\leq\psi+\sum_{j=1}^t (d_{h_j}-\rho_j)$: 
$$C(B,D) \subseteq RM_q(r,\mu+t)\mid_{\phi(g^{-1}(S))}.$$
When viewed from this perspective, the $t$ recovery sets of a position in a $t$-fold fiber product code correspond to the intersections of lines parallel to the $y_1,\dots, y_t$ coordinate axes with the curve.  In a Reed-Muller code, each line through a point gives a recovery set for the position of that point; in the fiber product case we limit ourselves to particular lines yielding known-cardinality intersections with the evaluation set.  By doing this, we are able to use functions of larger total degree than would be possible when using a standard Reed-Muller code.  In particular, we are able to let each $y_k$ have degree up to $d_{h_k}-2$, without any limit on the total degree of the function. Consequently, the more general fiber product codes yield larger rates than would be possible for the punctured Reed-Muller code.

This approach is one of several ways to define codes with larger rates. One way of increasing rate is lifting, pioneered by Guo, Kopparty, and Sudan in \cite{guo2013new}.  Their lifted Reed-Solomon codes make use of the observation that there are generally monomials of total degree $\geq q-1$ in $\mathbb{F}_q[x_1,\dots, x_m]$ that reduce to degree $<q-1$ univariate polynomials on every line in $\F_q^m$.  Adding these monomials to a lower degree Reed-Muller code greatly increases the rate without losing the very high availability.  In \cite{PartiallyLifted}, the authors create partially lifted codes, which increase the rate further by limiting the degree condition to a subset of lines as well as including non-monomial functions that meet the degree condition on lines.  In \cite{lopez2021hermitian}, the authors consider lifted codes where the evaluation points are $\F_{q^2}$-points of the Hermitian curve $\HH_q$ in the plane.  All non-tangent lines with $q+1$ points of intersection with the affine patch of $\HH_q$ yield recovery sets for this code, and the degree allowed for the univariate polynomials on lines is restricted by the size of the intersection with the curve.  

In fiber product codes, we use a subset of curve points as the evaluation sets, but the lines corresponding to recovery sets have been severely limited to those parallel to many coordinate axes. The degree allowed for the restricted polynomials on these lines is again limited by the cardinality of the intersections of the lines with the curve, which in the case of the fiber product construction is the degree of the map $h_k$ for some $1\leq k\leq t$.  This allows for many more monomials than bounding total degree as is the case in the Reed-Muller code.  

In the next sections, we demonstrate how these constructions and modifications naturally give rise to hierarchical local recovery with availability. 

\section{Two-level hierarchical recovery of Reed-Muller and generalized fiber product constructions} \label{fiber_product_construction_section}

\label{sec:geom_code_intro}

In the previous section, we observed how lines and fiber products of curves can provide many disjoint recovery sets for a codeword coordinate. Another concept for addressing the availability problem has been advanced in the form of hierarchical locality \cite{ballentine2019codes, giacomoHierarchy,freij2016locally,  grezet2021complete,  sasidharan2015codes}. The idea is that a position and its recovery set could form a locally recoverable code with smaller locality and local minimum distance, offering two nested recovery sets.  The larger recovery set would be used if the smaller set was insufficient to recover the given erasures. This notion captured in the next definition. 
\begin{definition} [\cite{sasidharan2015codes}]\label{H-LRCdef}
Let $n, n_1, n_2, s_1, s_2, \delta_1, \delta_2\in\z^+$ with $n_2<n_1$, $s_2\leq s_1$, and  $\delta_2<\delta_1$.  A linear code $C$ of length $n$ is said to have \textbf{hierarchical locality} with parameters $((n_1, s_1, \delta_1), (n_2,  s_2, \delta_2))$ if for each $i\in\left\{ 1, \dots, n \right\}$ there exists a punctured code $C_i$ of length $n_1$, dim$(C_{i})\leq s_1$, and minimum distance $d(C_{i})\geq \delta_1$ with $i$ in the support of $C_{i}$ such that $C_{i}$ is an $(s_2,\delta_2)$-locally recoverable code, where each local repair group has size $n_2$.

\end{definition}

Note that up to $\delta_2-1$ erasures 
among the support of any local repair group within $C_{i}$ can be locally recovered using the local recovery process for $C_{i}$.  Further erasures up to $\delta_1-1$ total can be recovered using all the coordinates of $C_{i}$. Hence, there are two levels of hierarchy present: 
\begin{enumerate}
\item one level using at most $n_2-\delta_2+1$ symbols from a recovery set for the symbol within $C_{i}$ if there are fewer than $\delta_2$ erasures and 
\item another larger one using at most $n_1-\delta_1+1$ symbols of $C_{i}$ if there are between $\delta_2$ and $\delta_1-1$ erasures. 
\end{enumerate}

\subsection{Hierarchical recovery of Reed-Muller codes}

The geometric definitions of Reed-Muller codes give them built-in nested structure in the form of intersections of the evaluation set with lines, planes, and hyperplanes within the ambient space, as depicted in Figure \ref{fig:ReedMullerHierarchy}.  
\begin{figure}
    \centering
    \includegraphics[height=3.5in]{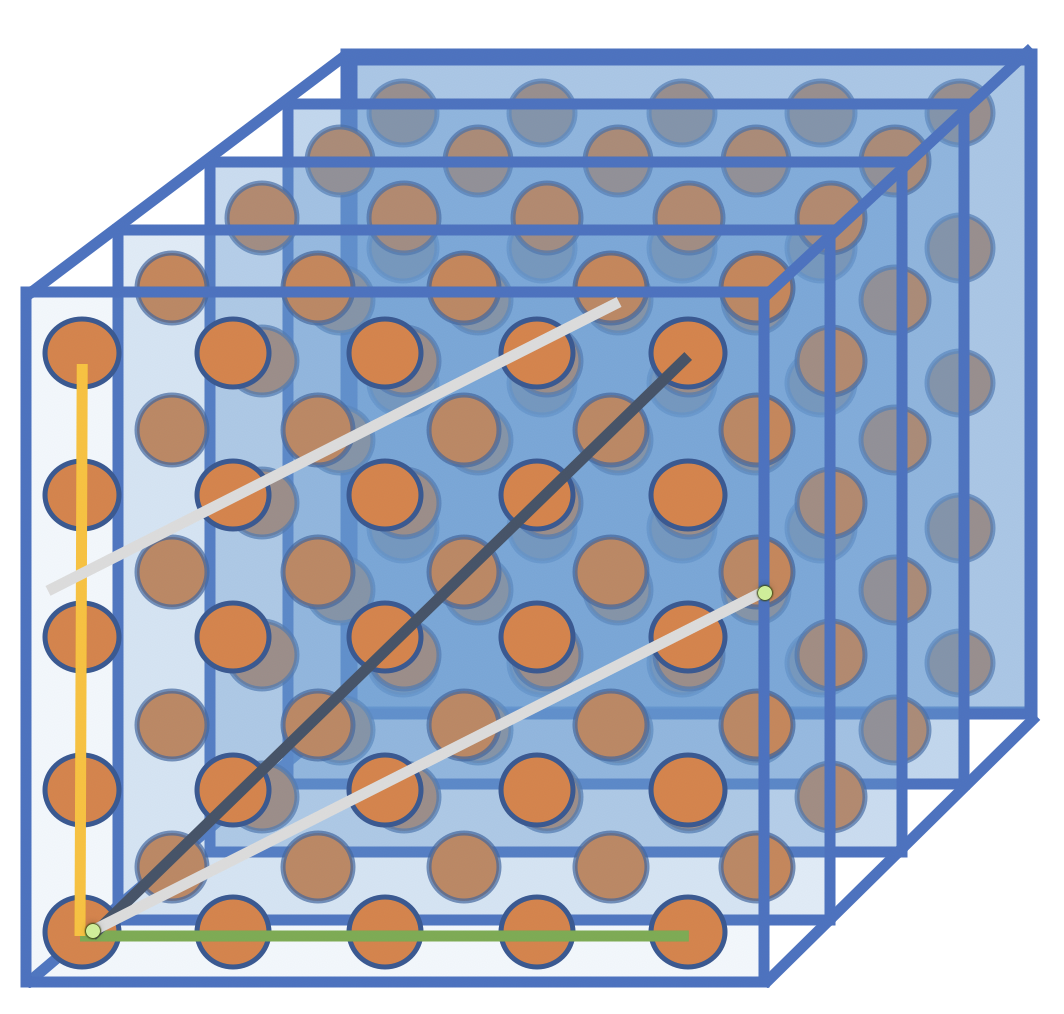}
    \caption{This cartoon illustrates how some hierarchical repair groups naturally arise in  the evaluation points of $RM_5(v,3)$ for $v\leq 3$.  The positions corresponding to points of any $\F_5$-rational line passing through the given point can act as a repair group for local recovery, with a middle code arising from any plane containing the line.  Here we see one way to partition the space into planes, the points of which correspond to the positions in the support of middle codes. For each point in these planes, there are six lines in the plane passing through the point (four are illustrated here), any of which can be chosen to give the positions in the support of the lower code for the chosen position.}
    \label{fig:ReedMullerHierarchy}
\end{figure}We see how to use this structure for basic (2-level) hierarchical recovery in both settings.

\begin{theorem}
    The Reed-Muller code $\mathcal{RM}_q(v,m)$ has hierarchical locality with parameters \[ \left(\left(q^2, \binom{v+2}{2}, q^2 -qv\right),\left(q, v+1,q-v\right)\right).\] 
\end{theorem}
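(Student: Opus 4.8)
The plan is to exploit the two geometric facts underlying Reed--Muller codes: that a polynomial function of total degree $\le v$ in $m$ variables restricts, along any affine flat, to a polynomial function of total degree $\le v$ in fewer variables, and that this restriction map is surjective onto the corresponding space of lower-variable polynomials. First I would recall the standard parameters (valid for $v\le q-1$, which is implicit here): $\mathcal{RM}_q(v,2)$ has dimension $\binom{v+2}{2}$ and minimum distance $q(q-v)=q^2-qv$, while $\mathcal{RM}_q(v,1)$ has dimension $v+1$ and minimum distance $q-v$.

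Fix a coordinate $i$, corresponding to a point $P_i\in\F_q^m$ (we may assume $m\ge 2$ so that a 2-flat through $P_i$ exists). Choose a 2-flat $A\ni P_i$ (if $m=2$, take $A=\F_q^m$) and let $C_i:=\mathcal{RM}_q(v,m)\mid_A$, the puncturing to the $q^2$ coordinates indexed by the points of $A$. I would then show that $C_i$ equals $\mathcal{RM}_q(v,2)$ up to a permutation of coordinates: fix an affine isomorphism $\varphi:\F_q^2\to A$; pulling back along $\varphi$ sends each restriction $f\mid_A$ with $\deg f\le v$ to a polynomial of degree $\le v$ in two variables (an affine substitution cannot raise degree), so $C_i\subseteq\mathcal{RM}_q(v,2)$ under this identification. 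Conversely, extending $\varphi$ to an affine automorphism of $\F_q^m$ and precomposing shows every $g\in\F_q[u_1,u_2]_{\le v}$ arises as some $f\mid_A$ with $\deg f\le v$, giving equality. Hence $C_i$ has length $q^2$, dimension $\binom{v+2}{2}=s_1$, and minimum distance $q^2-qv=\delta_1$, with $i$ in its support, as required by Definition \ref{H-LRCdef}.

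Next I would establish the inner locality of $C_i$. For any coordinate $j$ in the support of $C_i$ (that is, $P_j\in A$), pick one of the $q+1$ affine lines $L\subseteq A$ through $P_j$, and let $C_{i,j}:=C_i\mid_L$. The same restriction argument, now from the plane $A\cong\F_q^2$ down to the line $L\cong\F_q^1$, identifies $C_{i,j}$ with $\mathcal{RM}_q(v,1)$, so it has length $q$, dimension $v+1=s_2$, and minimum distance $q-v=\delta_2$. Since $s_2+\delta_2-1=(v+1)+(q-v)-1=q=n_2$, the code $C_{i,j}$ witnesses that $C_i$ is $(s_2,\delta_2)$-locally recoverable with each local repair group of size $n_2$, in the sense of Definitions \ref{LRCdef2} and \ref{H-LRCdef}. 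Finally I would check the monotonicity hypotheses of Definition \ref{H-LRCdef}: $n_2=q<q^2=n_1$, $s_2=v+1\le\binom{v+2}{2}=s_1$, and $\delta_2=q-v<q(q-v)=\delta_1$ since $q>1$.

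I expect no genuine obstacle here, only two care-points. The first is the surjectivity of the restriction map, which requires an affine (not merely linear) change of coordinates together with the observation that a degree-$\le v$ bivariate polynomial extends to a degree-$\le v$ polynomial in $m$ variables. The second is the bookkeeping that the punctured lengths $q^2$ and $q$ are attained exactly, with no coordinate collisions; this is automatic since a 2-flat (resp. line) consists of $q^2$ (resp. $q$) genuinely distinct points of $\F_q^m$, hence distinct coordinates of $\mathcal{RM}_q(v,m)$.
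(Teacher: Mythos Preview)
Your proposal is correct and follows essentially the same approach as the paper: take a 2-flat through $P_i$ as the support of the middle code (isomorphic to $\mathcal{RM}_q(v,2)$) and a line within that plane for the inner repair group (isomorphic to $\mathcal{RM}_q(v,1)$). Your treatment is in fact more careful than the paper's, since you explicitly justify via an affine change of coordinates that the punctured codes are exactly the lower-dimensional Reed--Muller codes, whereas the paper asserts this isomorphism without detail.
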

    
\begin{proof}   
Recall that the code $C=\mathcal{RM}_q(v, m)$, where $v\leq q-1,$ has the following parameters: 
\[ n=q^m, k={v+m\choose m}, d=(q-v)q^{m-1}.  \]
Consider the H-LRC  formed by taking planes and lines, respectively, through a given position. 
We  have that any plane containing $P_i$ can be the support for a middle code $C_i$, with local recovery inside the plane given by any partition of the plane into parallel lines.  If $v\leq q-\delta_2$, then $C_{1_i}$ is a $(q^2, q-\delta_2+1,\delta_2)$-locally recoverable code, and itself has dimension dim($C_{1_i})\leq \binom{q-\delta_2+2}{2}$ and minimum distance $d(C_{1_i})\geq q\delta_2$.  

Put in terms of $v$, the code $C=\mathcal{RM}_q(v, m)$ has hierarchical locality with parameters $((n_1, s_1, \delta_1), (n_2, s_2, \delta_2))$, where 
\begin{align*}
n_1& = q^2,    
    s_1={v+2\choose 2}, \delta_1=q^2-qv\\
    n_2 & = q,  
    s_2=v+1,  
    \delta_2=q-v.
\end{align*}
Indeed, the middle code for a fixed position $c_i=(x_1, \ldots, x_m)_i$ consists of evaluations of all points in a plane containing the point $c_i$. Therefore $s_1\leq q^2$, the number of points on a plane. The evaluations of the points on a plane form an LRC with parameters $(v+1, q-v)$, since there is a restriction to evaluations of a line contained in the plane. The middle code is isomorphic to  $\mathcal{RM}_q(2, q-2)$, and the bottom code is isomorphic to $\mathcal{RM}(1, q-2)$.  
\end{proof}

\begin{example} [Two-level hierarchical recovery using a Reed-Muller code]
    Consider $\mathcal{RM}_7(5,3)$, with $H=2$. Then $\mathcal{C}_2\subseteq \mathcal{C}_1\subseteq \mathcal{C}$, where the parameters of $\mathcal{C}$ are $[343, 56, 98]$, and the H-LRC parameters are $[49, 21, 14]$, and $[7,6,2]$, respectively. 
\end{example}

\begin{remark} 
Variations of the above two-level  hierarchy construction above can be obtained using a code $\mathcal{RM}(v,m)$ where the supports used for $n_1, n_2$ are defined by nested $\ell_1$- and $\ell_2$-flats in the geometry, with $m-1>\ell_1>\ell_2>1$. The hierarchical parameters are 
\[ \left(\left(q^{\ell_1}, \binom{v+\ell_1}{\ell_1}, (q-v)q^{\ell_1-1}\right),\left(q^{\ell_2}, \binom{v+\ell_2}{\ell_2}, (q-v)q^{\ell_2-1}\right)\right)\]
giving added flexibility in the construction.
\end{remark}

\subsection{Hierarchical recovery of fiber product codes}

In this subsection, we return to fiber product codes, examining the family's ability to support local recovery. In the fiber product construction of \cite{haymaker2018locally}, we see that the $j$-th recovery set for position $i$ consists of the positions corresponding to points on the curve that all share the same images under the function $g$ and each $y_k$ except $k=j$.  To provide the appropriate setting to discuss hierarchical recovery, consider a fiber product code $C(D,B)$ and define $C_{i,j}$ to be the punctured code of $C(D,B)$ with support $\{k:Q_k\in \tilde{g}_j^{-1}(\tilde{g}_j(Q_i))\}$. In other words, the $i$-th coordinate corresponds to a point $Q_i$ on the fiber product curve.  The $j$-th recovery set corresponds to a factor curve $\mathcal{Y}_j$ of the fiber product, and element $y_j$ in the function field $\F_q(\mathcal{X})$.  The support of $C_{i,j}$ will contain all coordinates corresponding to points $Q$ such that $y_k(Q)=y_k(Q_i)$ for all $k\neq j$ and $g(Q)=g(Q_i)$.  Local recovery of coordinate $i$ with $C_{i,j}$ is accomplished by observing that on the points with indices in the support of $C_{i,j}$, any function in $V$ restricts to a polynomial of degree at most $d_{h_j}-\rho_j$ in $y_j$, which can be interpolated at any missing points given the value of the polynomial on at least $d_{h_j}-\rho_j+1$ points. Thus each $C_{i,j}$ is a subcode of a punctured Reed-Solomon code. Any $\rho_j-1$ erasures within the support of $C_{i,j}$ are recoverable using $C_{i,j}$.  We formalize the $C(D,B)$ as a two-level H-LRC and record its parameters in the following theorem. 

\begin{theorem}\label{H-LRCTheorem}
Let $t$ be a positive integer with $t\geq 2$. Let $C=C(D,B)$ be an LRC($t$) constructed from a fiber product of curves as in Theorem \ref{LRCtheorem}.  Choose any $j,k\in\{1,2,\dots, t\}$ with $j\neq k$. Then $C$ has hierarchical locality with $n_1= d_{h_j}d_{h_k}$, $n_2= d_{h_j}$, $s_1=(d_{h_j}-\rho_j+1)(d_{h_k}-\rho_k+1)$, $s_2 = d_{h_j}-\rho_j+1$, $\delta_1 = \rho_j\rho_k$, and $\delta_2 = \rho_j$.  
\end{theorem}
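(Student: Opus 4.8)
The plan is to realize the claimed two-level hierarchical structure by nesting two of the fiber-product recovery maps, exactly as in the $H=2$ case of Definition~\ref{H-LRCdef}. Fix distinct indices $j,k\in\{1,\dots,t\}$. For a position $i$, corresponding to a point $Q_i\in\mathcal X(\F_q)$, I would take the middle code $C_i$ to be the punctured code of $C(D,B)$ supported on $\{\ell : Q_\ell\in \tilde g_{\{j,k\}}^{-1}(\tilde g_{\{j,k\}}(Q_i))\}$, i.e. the positions whose points agree with $Q_i$ in $g$ and in every $y_m$ with $m\neq j,k$. The bottom code is the $C_{i,j}$ of Theorem~\ref{H-LRCTheorem}'s hypotheses (equivalently, the $C_{i,j}$ from Theorem~\ref{LRCtheorem}), supported on the points additionally agreeing in $y_k$; by construction the bottom support is contained in the middle support, and ranging over the $d_{h_k}$ possible values of $y_k$ on the fiber partitions the middle support into repair groups.

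First I would pin down the sizes. The fibre $g^{-1}(P)\cap\mathcal X(\F_q)$ has exactly $d_g$ points by the first bullet in the construction, and fixing the values of all $y_m$ with $m\neq j,k$ cuts this fibre down to the points lying over a fixed point of $\tilde{\mathcal Y}_{\{j,k\}}$; since $\mathcal X=\mathcal Y_j\times_{\mathcal Y}\mathcal Y_k\times_{\mathcal Y}\tilde{\mathcal Y}_{\{j,k\}}$ and the function-field extensions are linearly disjoint with full constant field $\F_q$, the number of such points over a fully split base point is $d_{h_j}d_{h_k}$, giving $n_1=d_{h_j}d_{h_k}$; likewise fixing $y_k$ as well leaves $d_{h_j}$ points, so $n_2=d_{h_j}$. (This is the same counting already used implicitly in Theorems~\ref{LRCtheorem} and the proof that $\tilde g_j$ has degree $d_{h_j}$.)

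Next I would identify the two codes as (punctured) Reed--Muller/Reed--Solomon codes to read off $s_1,\delta_1$ and $s_2,\delta_2$. Restricted to the middle support, every basis function $f_m y_1^{e_1}\cdots y_t^{e_t}$ of $V$ becomes a bivariate polynomial in $y_j,y_k$ of degree at most $d_{h_j}-\rho_j$ in $y_j$ and at most $d_{h_k}-\rho_k$ in $y_k$ (the factor $f_m$ and the other $y_m$ are constants there), so $C_i$ is a punctured evaluation code of the $(d_{h_j}-\rho_j+1)(d_{h_k}-\rho_k+1)$-dimensional space of such polynomials on $d_{h_j}d_{h_k}$ points; hence $\dim C_i\le s_1=(d_{h_j}-\rho_j+1)(d_{h_k}-\rho_k+1)$, and a tensor-product (product-code) minimum-distance argument — exactly the Reed--Muller-type bound underlying Theorem~\ref{H-LRCTheorem}'s Reed--Muller analogue — gives $d(C_i)\ge (d_{h_j}-(d_{h_j}-\rho_j))(d_{h_k}-(d_{h_k}-\rho_k))=\rho_j\rho_k=\delta_1$. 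Restricting further to a line $y_k=\text{const}$ inside the middle support, the functions become univariate of degree $\le d_{h_j}-\rho_j$ on $d_{h_j}$ points, so each repair group yields an $(d_{h_j}-\rho_j+1,\rho_j)$-locally recoverable code with $n_2=d_{h_j}$, i.e. $s_2=d_{h_j}-\rho_j+1$ and $\delta_2=\rho_j$. Finally I would check the inequalities of Definition~\ref{H-LRCdef}: $n_2<n_1$ since $d_{h_j}<d_{h_j}d_{h_k}$ (using $d_{h_k}\ge 2$, which holds because $\rho_k\ge 1$ forces $d_{h_k}\ge\rho_k+1\ge 2$ whenever the locality $r_k=d_{h_k}-\rho_k+1$ is nontrivial), $s_2\le s_1$, and $\delta_2<\delta_1$ since $\rho_j<\rho_j\rho_k$ again by $\rho_k\ge 2$ (or more carefully, $\delta_2\le\delta_1$ in the degenerate case).

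The main obstacle I anticipate is the distance bound $d(C_i)\ge\rho_j\rho_k$: showing that the bivariate evaluation code on the $d_{h_j}d_{h_k}$-point grid really behaves like a product code requires knowing that the middle support is a genuine "combinatorial rectangle" — that is, that the values $(y_j(Q),y_k(Q))$ for $Q$ in the middle support run over a full product set of size $d_{h_j}\times d_{h_k}$, not some smaller configuration. This is where linear disjointness of the extensions $\F_q(\mathcal Y_i)/\F_q(\mathcal Y)$ and the full-splitting hypothesis on $S$ are essential, and I would spell out that step carefully; once the rectangle structure is in hand, the distance, dimension, and locality claims all follow from standard Reed--Muller/Reed--Solomon facts, mirroring the argument already given for $\mathcal{RM}_q(v,m)$.
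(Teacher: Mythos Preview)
Your construction of the middle code and the bottom code, the counting for $n_1,n_2$, and the dimension bounds $s_1,s_2$ are exactly what the paper does. The one substantive difference is in the argument for $\delta_1\ge \rho_j\rho_k$: you invoke the tensor-product minimum distance of two Reed--Solomon codes on the $d_{h_j}\times d_{h_k}$ grid, whereas the paper argues directly that any pattern of fewer than $\rho_j\rho_k$ erasures can be iteratively peeled off, since some erased position must have fewer than $\rho_j$ erasures in its $j$-recovery set or fewer than $\rho_k$ in its $k$-recovery set. Both are valid once the grid/row--column decomposition is in hand (the paper establishes this via $\supp(C_{i,(j,k)})=\bigsqcup_{l\in\supp(C_{i,j})}\supp(C_{l,k})$, which is the same ``combinatorial rectangle'' fact you flagged as the main obstacle). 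Your route is slightly slicker; the paper's route has the advantage of being explicitly constructive, doubling as a recovery algorithm rather than just a distance bound, and it avoids having to name the product-code structure.
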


\begin{proof}
We define the code $C_{i,(j,k)}$ to be the punctured code from $C(D,B)$ with support $ \{l:Q_l\in \tilde{g}_k^{-1}(\tilde{g}_k(\tilde{g}_j^{-1}(\tilde{g}_j(Q_i))))\}$. This consists of positions corresponding to all points $Q$ in $B$ so that for each $\iota\in\{1,2,\dots, t\}\setminus\{j,k\}$, $y_{\iota}(Q)=y_{\iota}(Q_i)$, and $g(Q)=g(Q_i)$.  We now have that on the set of points corresponding to the positions in the support of $C_{i,(j,k)}$, any function in $V$ restricts to a polynomial in $y_j$ and $y_k$ with bounded degree in each coordinate.  Thus $C_{i,(j,k)}$ is essentially a subcode of a punctured Reed-Muller code on two variables.  We will take the middle code $C_{1_i}=C_{i,(j,k)}$ for each $i\in\{1,2,\dots, n\}$. We observe that 
\[\supp(C_{i,(j,k)})=\bigcup_{l\in \supp(C_{i,j})}\supp(C_{l,k}) = \bigcup_{l\in \supp(C_{i,k})}\supp(C_{l,j}).\] 
Further,  \[\supp(C_{l_1,k})\cap\supp(C_{l_2,k})=\emptyset\] for $l_1\neq l_2$, $l_1,l_2\in \supp(C_{i,j})$.  This is because if $l_1$ and $l_2$ are both in $\supp(C_{i,j})$, then for the corresponding points $Q_{l_1}$ and $Q_{l_2}$ in the evaluation set $B$, we have $y_j(Q_{l_1})\neq y_j(Q_{l_2})$.  If $Q$ is a point with index in the intersection $\supp(C_{l_1,k})\cap\supp(C_{l_2,k})$, then $y_j(Q_{l_1})= y_j(Q)=y_j(Q_{l_2})$, a contradiction. 

Thus every coordinate in $\supp(C_{i,(j,k)})$ has two disjoint recovery sets in $C_{i,(j,k)}$, namely $C_{i,j}$ and $C_{i,k}$, a stronger property than is required for hierarchical recovery.  However, for the purpose of understanding the hierarchical recovery process, we will consider local recovery through the $j$-th recovery set corresponding to $C_{i,j}$ to determine the parameters. This yields $s_2=r_j=\deg(h_j)-\rho_j+1$ and $\delta_2=\rho_j$, where $r_j$ and $\rho_j$ are the local recovery parameters and $d_{h_j}$ is the degree of the covering map $h_j$ from the construction in Theorem \ref{LRCtheorem}.  

The length of $C_{i,(j,k)}$ is $d_{h_j}d_{h_k}$. The dimension of $C_{i,(j,k)}$ is at most $s_1=(d_{h_j}-\rho_j+1)(d_{h_k}-\rho_k+1)$ based on the maximum degrees in $y_j$ and $y_k$ of functions in $V$. The minimum distance of $C_{i,(j,k)}$ can be easily bounded below as follows.  Suppose that there are fewer than $\rho_j\rho_k$ erasures in a codeword of $C_{i,(j,k)}$. Then, there must be some $i^{\prime}$ with fewer than $\rho_k$ erasures in the indices corresponding to $C_{i^{\prime},k}$ or fewer than $\rho_j$ erasures in the indices corresponding to $C_{i^{\prime},j}$. Thus the value of the $i^{\prime}$-th position can be locally recovered by one of $C_{i^{\prime},j}$ or $C_{i^{\prime},k}$. By repeated application, this implies that all erasures can be recovered, so the minimum distance of $C_{i,(j,k)}$ must be at least $\rho_j\rho_k$. 
\end{proof}

\begin{remark}In \cite{ballentine2019codes}, the authors construct codes with hierarchical locality by a natural construction using towers of curves $\X\rightarrow\Y\rightarrow\mathcal{Z}$.  Our hierarchical codes can be viewed in this light, where the tower we construct is $\X\rightarrow \tilde{Y}_j\rightarrow\tilde{Y}_{\{j,k\}}$.  The construction in \cite[Proposition IV.1 ]{ballentine2019codes} requires $\delta_2=2$, while the fiber product construction here allows for a choice of $\delta_2=\rho_j\geq 2$.  Due to the particularly nice arrangement of recovery groups in this construction, the method of recovery using the middle code that we describe here gives a different lower bound $\delta_1$ for the minimum distance of the middle code.
\end{remark}

\subsubsection{Hierarchical recovery of LRC($t$) from a fiber product of Artin-Schreier curves}\label{subsection:ArtinSchreier}
Let $p$ be a prime, $h,t\in \z^+$ with $t\leq h$, and $q=p^{h}$. In \cite{haymaker2018locally, chara2022minimum}, the authors apply the LRC$(t)$ construction to create codes defined over $\F_{q^2}$ on $\mathcal{A}_{q,t}$, a fiber product of $t$ Artin-Schreier curves studied in \cite{van1995construct}. When $t=h$, $\mathcal{A}_{q,t}$ is isomorphic to $\HH_q$.  We let $A=\{a\in\F_{q^2}:a^q+a=0\}$, a $\F_p$-linear space generated over $\F_p$ by $\{a_1, a_2, \dots a_h\}$. 
Let $\Y$ be the projective line in coordinate $y_0$.  For $1\leq i \leq t$, let $\Y_i$ be defined by $y_i-y_i^p = a_iy_0^{q+1}$.   Let $h_i\colon \mathcal{Y}_{i}\rightarrow \mathcal{Y}$ be the map given by projection onto the $y_0$ coordinate.  We may then define $\mathcal{X}=\mathcal{A}_{q,t}$ to be the fiber product of these curves $\mathcal{Y}_{i}$ over $\mathcal{Y}$; i.e.,
\[\mathcal{A}_{q,t}=\mathcal{Y}_{1}\times_{\mathcal{Y}} \mathcal{Y}_{2}\times_{\mathcal{Y}} \dots 
\times_{\mathcal{Y}} \mathcal{Y}_{t}.\] As described in \cite{chara2022minimum}, we may identify $\mathcal{A}_{q,t}$ with its image in $\mathbb{P}^{t+1}$, where the affine points of $\mathcal{A}_{q,t}$ are given by 
      \begin{equation}\label{eq:ASpoints}
        B=\{(y_0,y_1,y_2,\dots, y_t)\in (\mathbb{F}_{q^2})^{t+1}: y_i^p-y_i=a_iy_0^{q+1}\textrm{ for all } 1\leq i\leq t\}.
      \end{equation}
 Let $B$ be the set of $p^tq^2$ affine points in $\X(\mathbb{F}_{q^2})$.
    Let $$P_i=(\alpha,\beta_1,\beta_2,\dots, \beta_t)\in B.$$
    Then $B_{i,j}$, the $j$-th recovery set for the position corresponding to $P_i$, is the set of positions corresponding to the points in $\{(\alpha,y_1,y_2,\dots, y_t)\in B:~ y_k=\beta_k~ \forall ~k\neq j\}.$  We then have $\vert B_{i,j}\vert =p$.
    On points corresponding to the positions in $B_{i,j}$, any function in $V$ varies as a polynomial in $y_j$ of degree at most $(p-2)$ and can therefore be interpolated by knowing its values on any $p-1$ points.

    Given $h, t$ as above, choose $l\leq q^2-\frac{t(p-2)(q+1)p^{t-1}+1}{p^t}$; this will  to ensure  an appropriate value of $d$.  Let $P_{\infty}$ denote the unique point at infinity on $\Y$, and let $D=lP_{\infty}$.
    Then $\mathcal{L}(D)$ is the set of polynomials in $y_0$ of degree at most $l$, a vector space of dimension $\ell(D)=l+1$.

Applying the hierarchical recovery construction in Theorem \ref{H-LRCTheorem} to the codes constructed in Theorem \ref{thm:ASCode}, we immediately obtain an H-LRCs over $\F_{q^2}$.

    \begin{theorem} \label{thm:ASCode}
      Consider $C_{\mathcal{A}_{q,t},l}:=C(V,B)$ where
     $\X=\mathcal{A}_{q,t}$is  the fiber product of the specified Artin--Schreier curves, with $B$ and $l$ as above and $D=l\infty_{\Y}$, and $V$ as in Theorem \ref{LRCtheorem}. Then the $[p^tq^2,(l+1)(p-1)^t,n-lp^t-t(p-2)(q+1)p^{t-1}]$ locally recoverable code $C_{\mathcal{A}_{q,t},l}$ over $\F_{q^2}$ with availability $t$
      and locality $\left(p-1,p-1,\dots,p-1 \right)$ is a $2$-level H-LRC with hierarchical parameters $n_2 = p$, $n_1=p^2$, $s_2=p-1$, $s_1=(p-1)^2$, $\delta_2=2$, $\delta_1=4$.
\end{theorem}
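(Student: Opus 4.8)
The plan is to observe that the Artin--Schreier fiber product $\mathcal{A}_{q,t}$, equipped with the maps $h_j$, is literally an instance of the general construction behind Theorems~\ref{LRCtheorem} and~\ref{H-LRCTheorem}, so that the proof reduces to identifying the relevant invariants and substituting. First I would fix the data: $\Y=\mathbb{P}^1$ with coordinate $y_0$, the curves $\Y_j\colon y_j^p-y_j=a_jy_0^{q+1}$, and the degree-$p$ projections $h_j\colon\Y_j\to\Y$, so $d_{h_j}=p$ for every $j$; choosing $\rho_j=2$ for all $j$ makes the locality $r_j=d_{h_j}-\rho_j+1=p-1$, matching the claimed locality vector $(p-1,\dots,p-1)$. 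Since $g\colon\X\to\Y$ has degree $d_g=\prod_j d_{h_j}=p^t$ and $S$ is taken to be the $q^2$ affine $\F_{q^2}$-points of $\mathbb{P}^1$, one checks the set-theoretic hypotheses of the $C(D,B)$ construction: $g$ is unramified over every affine point of $\Y$, so $|g^{-1}(P_i)\cap\X(\F_{q^2})|=d_g$ there; both $\supp(\tilde D)$ and $\supp(D)=\{\infty_\Y\}$ lie over infinity, hence are disjoint from $S$; and $l<q^2=|S|$. With $D=l\infty_\Y$ on $\mathbb{P}^1$, $\mathcal{L}(\Y,D)$ is the space of polynomials in $y_0$ of degree at most $l$, so $\ell(D)=l+1$.

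Next I would read off the LRC($t$) parameters from Theorem~\ref{LRCtheorem}: the length is $n=d_g|S|=p^tq^2$, the dimension is $\ell(D)\prod_{j=1}^t(d_{h_j}-\rho_j+1)=(l+1)(p-1)^t$, and the code has availability $t$ with locality $(p-1,\dots,p-1)$. For the minimum distance I need $\deg(D_{j,-})=d_{g_j}d_{y_j}$, the degree of the pole divisor of $y_j$ on $\X$: here $d_{g_j}=p^{t-1}$ because $g_j\colon\X\to\Y_j$ forgets one Artin--Schreier factor, and $d_{y_j}=q+1$ because on $\Y_j$ the function $y_j$ is totally ramified over $\infty$ with pole order $q+1$ (comparing pole orders in $y_j^p-y_j=a_jy_0^{q+1}$ and using $\gcd(p,q+1)=1$). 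Hence $\sum_{j=1}^t(d_{h_j}-\rho_j)(d_{g_j}d_{y_j})=t(p-2)(q+1)p^{t-1}$, and Theorem~\ref{LRCtheorem} gives $d\ge n-lp^t-t(p-2)(q+1)p^{t-1}$; the stated constraint $l\le q^2-\tfrac{t(p-2)(q+1)p^{t-1}+1}{p^t}$ is exactly what forces this bound to be positive, as required by the hypotheses of Theorem~\ref{LRCtheorem}. The ramification facts $d_{g_j}=p^{t-1}$ and $d_{y_j}=q+1$ (and an exact value of $d$, if wanted rather than the lower bound) are worked out in \cite{haymaker2018locally,chara2022minimum,van1995construct} and may simply be cited.

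Finally I would apply Theorem~\ref{H-LRCTheorem}. Assuming $t\ge 2$, choose any two distinct indices $j,k\in\{1,\dots,t\}$; since $C_{\mathcal{A}_{q,t},l}$ has been identified as an LRC($t$) from a fiber product as in Theorem~\ref{LRCtheorem} with $d_{h_j}=d_{h_k}=p$ and $\rho_j=\rho_k=2$, substitution into Theorem~\ref{H-LRCTheorem} yields $n_1=d_{h_j}d_{h_k}=p^2$, $n_2=d_{h_j}=p$, $s_1=(d_{h_j}-\rho_j+1)(d_{h_k}-\rho_k+1)=(p-1)^2$, $s_2=d_{h_j}-\rho_j+1=p-1$, $\delta_1=\rho_j\rho_k=4$, and $\delta_2=\rho_j=2$, which is exactly the claimed hierarchical parameter set. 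I expect the only genuine obstacle — aside from bookkeeping — to be verifying the pole-order computation $d_{y_j}=q+1$ together with the set-theoretic hypotheses on $(S,D)$ for this explicit curve (in particular the $\F_{q^2}$-rationality of the $p^t$ points over each affine point of $\Y$, which rests on a trace condition for the Artin--Schreier equations); everything else is direct substitution into Theorems~\ref{LRCtheorem} and~\ref{H-LRCTheorem}.
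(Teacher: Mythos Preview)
Your proposal is correct and follows essentially the same approach as the paper: cite (or, in your case, sketch) the LRC($t$) parameters from \cite{haymaker2018locally, chara2022minimum}, note that $d_{h_j}=d_{h_k}=p$ and $\rho_j=\rho_k=2$ for all factors, and then substitute directly into Theorem~\ref{H-LRCTheorem}. The paper's own proof is in fact briefer than yours---it simply cites the references for the LRC($t$) part rather than verifying the pole orders and set-theoretic hypotheses you outline---so your version is, if anything, more self-contained.
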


\begin{proof}
The fact that $C_{\mathcal{A}_{q,t},l}$ is an LRC($t$) with the parameter shown follows from \cite{chara2022minimum, haymaker2018locally}. To verify that it is an H-LRC as stated, set $j,k\in\left\{ 1, \dots, t \right\}$, $j\neq k$.  Since the $t$ factor curves are all isomorphic, with degree $p$ maps to the base curve, we have $d_{h_j}=d_{h_k}=p$,  $r_j=r_k=p-1$, and $\rho_j=\rho_k=2$ for any choice of $j,k$. Directly applying Theorem \ref{H-LRCTheorem} gives the required hierarchical parameters.
\end{proof}

\subsubsection{Hierarchical recovery of LRC($2$) from the Hermitian curve as a fiber product}

Codes on the Hermitian curve have been extensively studied, including in the context of locally recoverable codes with availability.  Let $\mathcal{H}_q$ be the Hermitian curve, i.e. the projective curve defined over $\F_{q^2}$ by the affine equation $x^q+x=y^{q+1}$.   The Hermitian curve can be constructed as a fiber product as follows. Let $\Y=\PP^1$, $\Y_1\colon u=y^{q+1}$, and $\Y_2\colon u=x^q+x$. Let $h_j\colon \Y_j\to\PP^1$ be projection onto $u$ for $j=1,2$. 
Note that these maps have coprime degrees. Thus, the corresponding function field extensions are linearly disjoint.
Then the fiber product $\X=\Y_1\times_{\Y}\Y_2$ is isomorphic to the curve $\HH_q\colon x^q+x=y^{q+1}$.
Indeed, the affine points of $\X(\F_{q^2})$ are given by
\[\{((y,u),(x,u)):x,y,u\in \F_{q^2}, y^{q+1}=u=x^q+x\} \subseteq \PP^2\times\PP^2.\]
Hence, this fiber product is isomorphic (by the natural map) to the intersection of the two hypersurfaces in $\PP^3$ with affine equations $u=x^q+x$ and $u=y^{q+1}$ and also to the curve $\HH_q$ defined in $\PP^2$ by affine equation $y^{q+1}=x^q+x$.
Let $C_{\mathcal{H}_q}$ be the LRC($2$) presented in Proposition 5.1 of \cite{barg2015locally}, as well as Theorem 7 of \cite{chara2022minimum}.
    For this code, we take the curve $\HH_q$ with evaluation set $B_{\mathcal{H}_q}=\{P\in\HH_q(\F_{q^2})\colon y(P)\neq 0 \}.$
    We can check that $\vert B_{\mathcal{H}_q}\vert=q^3-q$.  Let $V_{\mathcal{H}_q}$ be the space of functions with basis $\{x^{e_1}y^{e_2}\colon 0\leq e_1 \leq q-2, 0\leq e_2\leq q-1\}$. This choice of functions corresponds to choosing a zero divisor $D=0$ on the base curve $\Y$ in the construction from Theorem \ref{LRCtheorem}, yielding $\ell(D)=1$, and $\rho_1=\rho_2=2$.  Let $C_{\mathcal{H}_q}=C(V_{\mathcal{H}_q}, B_{\mathcal{H}_q})$.
    
    \begin{theorem}[\cite{barg2015locally,chara2022minimum}] \label{thm:hermitian-hierarchy(2)}
            The  $[q^3-q, q^2-q,q^3 -2 q^2 + q +2]$, $((q-2,2),(q-1,2))$ LRC($2$) $C_{\mathcal{H}_q}$ is a $2$-level H-LRC with $n_1=q(q+1)$, $s_1=q(q-1)$, $\delta_1=4$, $n_2=q+1$, $s_2=q-1$, $\delta_2=2$.
    \end{theorem}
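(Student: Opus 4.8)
The plan is to recognize $C_{\mathcal{H}_q}$ as a special case of the fiber product construction of Theorem \ref{LRCtheorem}, extract the two covering degrees and the local-distance parameters $\rho_1,\rho_2$, and then invoke Theorem \ref{H-LRCTheorem} with $t=2$, reading off the hierarchical parameters by direct substitution.

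First I would assemble the fiber product data already set up above: $\mathcal{H}_q \cong \Y_1\times_{\Y}\Y_2$ with $\Y=\PP^1$ in the coordinate $u$, $\Y_1\colon u=y^{q+1}$, $\Y_2\colon u=x^q+x$, and $h_1,h_2$ the projections onto $u$. I would check that $h_1$ has degree $q+1$ (the norm map $y\mapsto y^{q+1}$ is $(q+1)$-to-one onto $\F_q$) and $h_2$ has degree $q$ (the trace map $x\mapsto x^q+x$ is $q$-to-one onto $\F_q$); since $\gcd(q,q+1)=1$, both maps are separable and the extensions $\F_{q^2}(\Y_i)/\F_{q^2}(\Y)$ are linearly disjoint, so the hypotheses of Theorem \ref{LRCtheorem} hold. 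The code $C_{\mathcal{H}_q}=C(V_{\mathcal{H}_q},B_{\mathcal{H}_q})$ is then exactly that construction with base divisor $D=0$ (so $\ell(D)=1$, $f_1=1$), with $\rho_1=\rho_2=2$, and with $S=g(B_{\mathcal{H}_q})=\F_q\setminus\{0\}$, so that $|S| = |B_{\mathcal{H}_q}|/d_g = (q^3-q)/(q(q+1)) = q-1 > 0 = l$. Substituting into the formulas of Theorem \ref{LRCtheorem} then reproduces the advertised $[q^3-q,\ q^2-q,\ \ge q^3-2q^2+q+2]$ parameters and the availability-$2$ locality structure; this part is just a re-derivation, in the present notation, of the cited results of \cite{barg2015locally,chara2022minimum}.

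Next I would apply Theorem \ref{H-LRCTheorem} with $t=2$ and $\{j,k\}=\{1,2\}$. Since $t=2$ there are no auxiliary coordinates to fix, so the middle code $C_{1_i}=C_{i,(j,k)}$ is supported on the full fiber $g^{-1}(g(Q_i))$; hence $n_1=d_{h_j}d_{h_k}=q(q+1)$, $s_1=(d_{h_j}-\rho_j+1)(d_{h_k}-\rho_k+1)=q(q-1)$, and $\delta_1=\rho_j\rho_k=4$. The bottom code $C_{i,j}$ is a punctured Reed--Solomon-type code on a fiber of $\tilde g_j$, giving $n_2=d_{h_j}$, $s_2=d_{h_j}-\rho_j+1$, and $\delta_2=\rho_j=2$; choosing $j$ to be the index of the degree-$(q+1)$ factor matches the listed bottom-level length $n_2=q+1$. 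The only point requiring genuine (if brief) verification is that the restricted evaluation set $B_{\mathcal{H}_q}$, which discards the $q$ affine points with $y=0$, still meets every relevant fiber of $g$, $\tilde g_1$, and $\tilde g_2$ in the full expected number of $\F_{q^2}$-points, so the repair groups really have the claimed cardinalities; this is immediate because on $B_{\mathcal{H}_q}$ one always has $u=y^{q+1}\neq 0$, and all the excluded points sit over $u=0$. I do not anticipate any substantive obstacle: the statement is essentially a corollary of Theorem \ref{H-LRCTheorem}, and the remaining care is purely in bookkeeping the small constants and in fixing which factor is assigned the role of $j$.
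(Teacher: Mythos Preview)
Your proposal is correct and follows essentially the same route as the paper: both proofs recognize $C_{\mathcal{H}_q}$ as the fiber-product code of Theorem~\ref{LRCtheorem} with $d_{h_1}=q+1$, $d_{h_2}=q$, $\rho_1=\rho_2=2$, and then invoke Theorem~\ref{H-LRCTheorem} directly to read off $n_1=q(q+1)$, $s_1=q(q-1)$, $\delta_1=4$. The one point of divergence is which factor plays the role of $j$ at the bottom level: you take $j$ to be the degree-$(q+1)$ factor (so $n_2=q+1$, which forces $s_2=d_{h_j}-\rho_j+1=q$), whereas the paper's proof takes $(j,k)=(2,1)$, so that local recovery uses the degree-$q$ factor and yields $s_2=q-1$; the paper then notes in the subsequent remark that the other choice gives the ``less favorable'' value. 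Your extra check that the excluded points with $y=0$ all lie over $u=0$, so that every relevant fiber in $B_{\mathcal{H}_q}$ is full, is a nice bit of care that the paper leaves implicit.
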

    
    \begin{proof}
The supports of the two punctured codes giving recovery sets for the position corresponding to a point $P\in B_{\mathcal{H}_q}$ consist of the positions corresponding to points $Q \in B_{\mathcal{H}_q}$, $Q\neq P$ sharing the same $x$-coordinate as $P$ and those sharing the same $y$-coordinate value as $P$, respectively.  Concretely, if $P_i=(x_0,y_0)$ and $\alpha_i\in\F_q$ such that $x_0^q+x=\alpha_i = y_0^{q+1}$, then $C_{i,1}$ has support $\{j:P_j=(x_0, y)\in B_{\HH_q}, y^{q+1}=\alpha_i\}$ and $C_{i,2}$ has support $\{j:P_j=(x, y_0)\in B_{\HH_q}, \alpha_i=x^q+x\}$.

    Directly applying the construction in the proof of Theorem \ref{H-LRCTheorem} to the fiber product code $C_{\HH_q}$, we obtain an H-LRC with the given hierarchical parameters.  Define $C_{1_i}=C_{i,(2,1)}$ to be the code with support $\{j:P_j=(x, y)\in B_{\HH_q}, y^{q+1}=\alpha_i=x^q+x\}$.  The length of $C_{1_i}$ is $q(q+1)$;  the dimension of $C_{1_i}$ is at most $q(q-1)=s_1$; and the minimum distance of $C_{1_i}$ is at least $4=\delta_1$.  The parameters of the local recovery sets are $s_2=r_2=q-2$ and $\delta_2=\rho_2=2$.  
    \end{proof}
    
     \begin{remark}
   \begin{enumerate} 
   \item  In \cite[Example VII.3]{ballentine2019codes}, the authors create an H-LRC from the Hermitian curve. We obtain a different H-LRC by our fiber product-based construction.
    \item Notice that by simply by considering the other local recovery set, we also have an H-LRC with the less favorable $s_2=r_1=q-2$.  We have chosen the better parameter in the theorem.
    \item In comparing to the standard hierarchy bound of 
    Sasidharan, Agarwal, and Kumar \cite{sasidharan2015codes},  Theorem \ref{thm:hermitian-hierarchy(2)} shows a gap between the bound and the code parameters because the code also has natural availability described in Section \ref{section:hierarchy and availabililty} that is not accounted for in the bound.

     \end{enumerate}
    \end{remark}

We observe that the dimension of the middle code $C_i$ is equal to the dimension of $C_{\HH_q}$. This will occur in the general construction when $t=2$ and $\ell(D)=1$. It is desirable to recover the required erasures by accessing fewer symbols in the middle code than we would have to access in the full code.  Given that the length $n_1$ of $C_{1_i}$ is smaller than the length of $C_{\HH_q}$, giving a higher rate for $C_{1_i}$, one might wonder whether $C_{\HH_q}$ offers any advantages at all.  However, $C_{\HH_q}$ has much larger minimum distance than $C_{1_i}$ promoting greater global error correction or erasure repair. 

This situation motivates discerning additional levels of hierarchy in constructions, the primary topic of the next section.

\section{Codes with $H$-level hierarchy from Reed-Muller and generalized fiber product construction}\label{sec:hlevelH-LRC}

We now present the natural extension of H-LRCs to multiple levels of hierarchy.  This generalization was mentioned in \cite{sasidharan2015codes} and precisely defined in \cite{chen2020cyclic, grezet2021complete}.  Adapted to the  notation above, we have the following definition.

\begin{definition}\label{def:hlevelH-LRC}
    Let $H, n_j, s_j, \delta_j \in\z^+$ for all $j\in [H]$ with $H\geq 2$, $n_{1}>n_{2}>\dots >n_H$, $s_{1}\geq s_2\geq \dots \geq s_H$, and $\delta_{1}>\delta_{2}>\dots >\delta_H$. An $[n,k,d]$ linear code $C$ has $H$-level hierarchical locality with parameters $[(n_1, s_1,\delta_1),(n_2, s_2,\delta_2), \dots, (n_H, s_H,\delta_H)]$ if, for each $i\in\left\{ 1, \dots, n \right\}$, there is a set of $H$ punctured codes $\{(C_j)_i: j\in[H]\}$, each code of length $n_j$, so that $i\in\supp((C_j)_i)$, $I_{j_i}=\supp((C_j)_i)\setminus\{i\}$, so that \begin{itemize}
        \item $(I_H)_i\subseteq (I_{H-1})_i\subseteq\dots\subseteq (I_1)_i$,
        \item $\dim((C_{j})_{i})\leq s_j$ for all $j$,
        \item $(C_j)_i$ has minimum distance at least $\delta_j$ for all $j$,
        \item $(C_j)_i$ has $(H-j)$-level hierarchical locality through the set of codes $\{(C_k)_i: j<k\leq H\}$ for all $k<H-1$.
    \end{itemize}
\end{definition}
The convention is that $n_1<n$, but that is not required. While one could take $C_1=C$, we typically do not do so to highlight the local nature of recovery meaning using a proper subset of coordinates of the code are employed at each level.

We think of the codes $(C_{j})_{1}, \dots, (C_{j})_{n}$ as middle codes for the $j$-th level in the hierarchy for positions $i=1, \dots, n$. Observe that a code with $((n_1, s_1, \delta_1), (n_2,  s_2, \delta_2))$ hierarchical locality
as in Definition \ref{H-LRCdef} has  $H$-level hierarchy with $H=2$. 

As noted in the previous section, some codes with $2$-level hierarchy can be further examined to produced additional tiers for recovery. In this section, we demonstrate that, noting how these recovery sets arise naturally from embedded structures. 

\subsection{$H$-level hierarchy from Reed-Muller codes} Recall that a Reed-Muller code consists of evaluations of points in $\mathbb{F}_q^m$, points which form the finite affine geometry 
of dimension $m$. Substructures in this geometry have dimension $m-1, m-2, \ldots, 1$, where the substructures of dimension 1 are lines, those of dimension 2 are planes, and those of dimension $d\geq 3$  are $d$-flats. An $(m-1)$-flat is called a hyperplane. Starting with a codeword position corresponding to a point and choosing a line, then plane, a 3-flat, and so on as the nested recovery sets for that position, the result is an $H$-level H-LRC with $H=m-1$ and hierarchical parameters as described in the next result. 

\begin{theorem}\label{thm:rm-hierarchy}
The Reed-Muller code $\mathcal{RM}_q(v,m)$ is an $(m-1)$-level H-LRC with parameters
\begin{itemize}
    \item $n_j=q^{H-j+1}=q^{m-j}$
    \item $s_j={v+m-j\choose m-j}$
    \item $\delta_j=(q-v)q^{m-j-1}$
\end{itemize} 
for all $j \in \{ 1, \dots, m-1\}$. 
\end{theorem}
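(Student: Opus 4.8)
The plan is to induct on the number of levels, using the flag of nested affine subspaces through a fixed point as the nested recovery sets, and invoking the two-level result already established for Reed-Muller codes at each step. First I would fix a codeword position $i$ corresponding to a point $P_i \in \F_q^m$, and for each $j \in \{1,\dots,m-1\}$ choose an $(m-j)$-flat $A_j \ni P_i$ with $A_1 \supset A_2 \supset \cdots \supset A_{m-1}$, where $A_{m-1}$ is a line. (Such a descending chain exists: pick any complete flag of subspaces and translate it so every flat passes through $P_i$.) Set $(C_j)_i$ to be the punctured code of $\mathcal{RM}_q(v,m)$ with support $A_j$; then the nesting $(I_H)_i \subseteq \cdots \subseteq (I_1)_i$ is immediate from $A_{m-1}\subseteq\cdots\subseteq A_1$.

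Next I would identify each $(C_j)_i$ with a smaller Reed-Muller code. The restriction of all polynomials in $\F_q[x_1,\dots,x_m]_{\le v}$ to the $(m-j)$-flat $A_j$, after an affine change of coordinates carrying $A_j$ to $\F_q^{m-j}$, is exactly the set of evaluations of polynomials in $m-j$ variables of total degree at most $v$, so $(C_j)_i \cong \mathcal{RM}_q(v, m-j)$. This gives $n_j = q^{m-j}$ immediately, $\dim((C_j)_i) \le \binom{v+m-j}{m-j} = s_j$ (with equality when $v \le q-1$), and $d((C_j)_i) \ge (q-v)q^{m-j-1} = \delta_j$ from the standard Reed-Muller minimum distance formula; the hypotheses $n_1 > \cdots > n_{m-1}$, $s_1 \ge \cdots \ge s_{m-1}$, $\delta_1 > \cdots > \delta_{m-1}$ all follow from these closed forms. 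The monotonicity of $\delta_j$ uses $v < q$, which is the running assumption for these codes.

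The hierarchical-locality clause of Definition \ref{def:hlevelH-LRC} is then handled by induction: since $(C_j)_i \cong \mathcal{RM}_q(v,m-j)$ and its own nested recovery chain for position (the image of) $P_i$ is precisely $A_{j+1} \supset A_{j+2} \supset \cdots \supset A_{m-1}$, the code $(C_j)_i$ is an $(m-j-1)$-level H-LRC with the parameters obtained by reindexing, i.e. it has $(H-j)$-level hierarchical locality through $\{(C_k)_i : j < k \le H\}$ with $H = m-1$. The base case $j = m-2$ (a plane containing a line) is exactly the two-level statement already proved for Reed-Muller codes, after substituting $m \mapsto 2$.

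The main obstacle is bookkeeping rather than mathematics: one must check that the affine change of coordinates flattening $A_j$ is compatible simultaneously with the sub-flats $A_{j+1},\dots,A_{m-1}$, so that the recursive structure genuinely matches Definition \ref{def:hlevelH-LRC} — in particular that ``the $j$-th level code restricted further to the $(j{+}1)$-st flat'' coincides on the nose with ``the $(j{+}1)$-st level code,'' not merely up to isomorphism. This is where I would be most careful, and it is cleanly resolved by fixing one complete affine flag through $P_i$ at the outset and defining all the $(C_j)_i$ from that single choice, so that every restriction map in sight is induced by the same coordinate projections. I would also note explicitly that the dimension bound is an inequality (as the definition requires $\le s_j$) since for $v \ge q$ reductions modulo $x^q - x$ can drop the dimension, though in the regime $v < q$ of interest equality holds.
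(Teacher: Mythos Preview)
Your proposal is correct and follows essentially the same approach as the paper: fix a nested chain of $(m-j)$-flats through the point $P_i$, observe that the punctured code on each flat is isometric to $\mathcal{RM}_q(v,m-j)$, and read off the parameters from the standard Reed-Muller formulas. The paper's proof is terser (it does not spell out the induction, the affine coordinate change, or the flag-compatibility check you flag as the main bookkeeping obstacle), but the underlying argument is identical.
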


\begin{proof} 

    Consider a point $P \in \F_q^m$. To recover an erasure in the coordinate indexed $i$ by $P$, we may consider codes with increasing supports 
    $$
   (I_{m-1})_i \subseteq (I_{m-2})_i \subseteq \cdots \subseteq (I_{2})_i\subseteq (I_1)_i
    $$
    where $I_j \cup \{ P \}$ is a $(m-j)-$dimensional subspace of $\F_q^m$. Then for each $j \in \{1, \dots, m-1\}$, the punctured code $(C_{j})_{i}$
is isometric to the Reed-Muller code $\mathcal{RM}_q(v,m-j)$, which completes the proof.    
\end{proof}

\subsection{$H$-level hierarchy from fiber product codes}
An LRC($t$) arising from a $t$-fold fiber product naturally gives rise to up to $t$-level hierarchy. Figure \ref{Fig:iteratedfiber} is a diagram of the curve covers that give this hierarchical structure.

\begin{figure}
    \centering

\tikzset{every picture/.style={line width=0.75pt}}        

\begin{tikzpicture}[x=0.75pt,y=0.75pt,yscale=-1,xscale=1]
 
\draw    (400,20) -- (450,80) ;
\draw    (400,20) -- (350,80) ;
\draw    (450,105) -- (400,380) ;
\draw    (350,105) -- (400,380) ;

\draw    (350,105) -- (190,185) ;
\draw    (350,105) -- (323,185) ;
\draw    (320,205) -- (400,380) ;
\draw    (190,205) -- (400,380) ;
\draw    (25,305) -- (400,380) ;
\draw    (190,205) -- (120,235) ;
\draw    (95,245) -- (25,275) ;

\draw (380,20) node [anchor= south][inner sep=0.75pt]    {$\X:=\Y_{1} \times _{\Y} \Y_{2} \times _{\Y} \dotsc \times _{\Y} \Y_{t}{}$};
\draw (450,80) node [anchor=north west][inner sep=0.75pt]    {$\Y_{t}$};
\draw (390,80) node [anchor=north east][inner sep=0.75pt]    {$\tilde{\Y}_{t} :=\Y_{1} \times _{\Y} \Y_{2} \times _{\Y} \dotsc \times _{\Y} \Y_{t}{}_{- 1}$};

\draw (400,380) node [anchor=north west][inner sep=0.75pt]    {$\Y$};
\draw (320,180) node [anchor=north][inner sep=0.75pt]    {$\Y_{t-1}$};
\draw (20,180) node [anchor=north west][inner sep=0.75pt]    {$\tilde{\Y}_{\{t,t-1\}} :=\Y_{1} \times _{\Y} \Y_{2} \times _{\Y} \dotsc \times _{\Y} \Y_{t}{}_{- 2}$};
\draw (95,235) node [anchor=north west][inner sep=0.75pt]    {$\dotsc $};
\draw (25,280) node [anchor=north][inner sep=0.75pt]    {$\Y_{1}$};
\draw (280,310) node [anchor=north west][inner sep=0.75pt]    {$\vdots $};

\end{tikzpicture}

    \caption{A diagram of covers in the iterated fiber product construction.  The tower of covers leading to $t$-level hierarchy is the far left sequence of covers.}
    \label{Fig:iteratedfiber}

\end{figure}
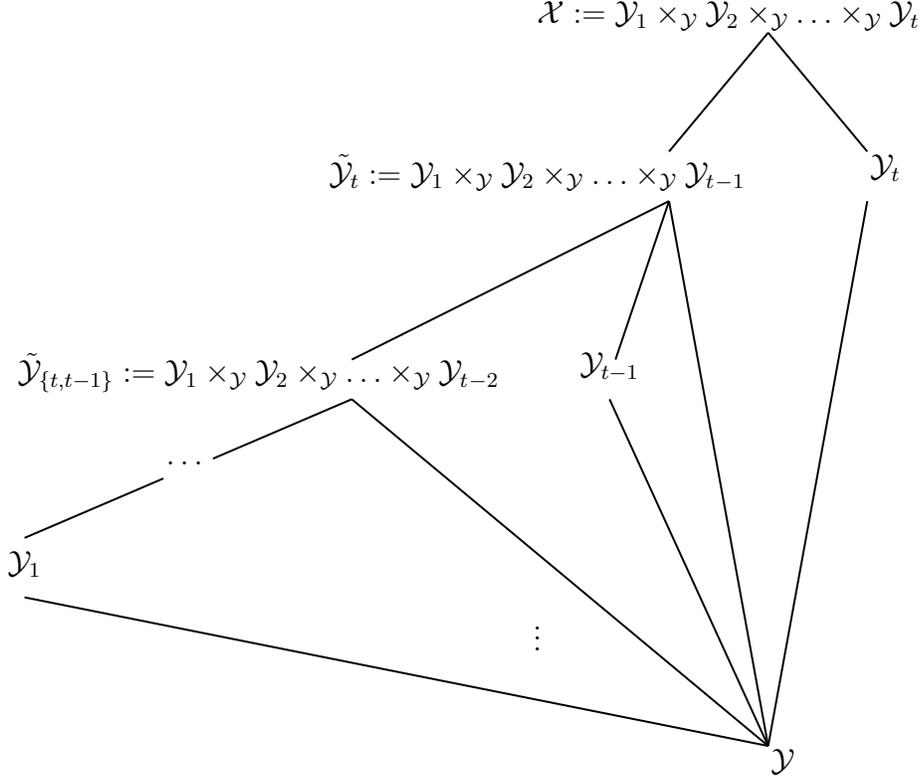

\begin{theorem}\label{H-LRC-H-Level-Theorem}
Let $t\in\z^+$, $t\geq 2$. Let $\X$ be a fiber product satisfying the hypotheses of and with all notation as in Theorem \ref{LRCtheorem}.  Let $C=C_{\X}$ be the $[n,k,d]$, $[(r_j=d_{h_j}-\rho_j+1, \rho_j)]_{j\in\left\{ 1, \dots, t \right\}}$ LRC($t$) constructed in the proof of Theorem \ref{LRCtheorem}.  For $H\in\left\{ 1, \dots, t \right\}$, $2\leq H$, $C$ is an H-LRC with $H$-level hierarchy, and hierarchical parameters:
\begin{itemize}
    \item $n_j=\prod_{k=j}^H d_{h_k}$,
    \item $s_j = \prod_{k=j}^H (d_{h_k}-\rho_k+1)$,
    \item $\delta_j = \prod_{k=j}^H \rho_k$.
\end{itemize}
\end{theorem}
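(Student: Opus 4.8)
The plan is to prove this by induction on $H$, using Theorem~\ref{H-LRCTheorem} (the $2$-level case) as the engine and the tower structure $\X \to \tilde\Y_j \to \tilde\Y_{\{j,k\}} \to \cdots$ depicted in Figure~\ref{Fig:iteratedfiber} as the skeleton. Fix a chain of indices; by relabeling we may assume we use the curves $\Y_H, \Y_{H-1}, \dots, \Y_1$ in that order, so that the tower of covers is $\X \to \tilde\Y_{\{t\}} \to \tilde\Y_{\{t,t-1\}}\to\cdots$ restricted to the first $H$ factors. For each position $i$ (corresponding to a point $Q_i\in B$) and each $j\in\{1,\dots,H\}$, define $(C_j)_i$ to be the punctured code of $C(D,B)$ whose support is
$$\{\,l : Q_l\in \tilde g_j^{-1}\tilde g_j\bigl(\tilde g_{j+1}^{-1}\tilde g_{j+1}(\cdots \tilde g_H^{-1}\tilde g_H(Q_i)\cdots)\bigr)\,\},$$
i.e., the set of points $Q$ with $y_\iota(Q)=y_\iota(Q_i)$ for all $\iota\notin\{j,j+1,\dots,H\}$ and $g(Q)=g(Q_i)$. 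This is exactly the code $C_{i,(j,j+1,\dots,H)}$ obtained by iterating the construction in the proof of Theorem~\ref{H-LRCTheorem}, and it generalizes the middle code there. One checks the nesting $(I_H)_i\subseteq(I_{H-1})_i\subseteq\cdots\subseteq(I_1)_i$ directly from the definition, since enlarging the index set $\{j,\dots,H\}$ to $\{j-1,\dots,H\}$ only relaxes constraints.

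Next I would verify the three numerical parameters for each $(C_j)_i$. The length $n_j = \prod_{k=j}^H d_{h_k}$ follows because, over the points with $g$-value fixed and $y_\iota$ fixed for $\iota<j$, the tuple $(y_j,\dots,y_H)$ ranges over a set of size exactly $\prod_{k=j}^H d_{h_k}$ (using $|g^{-1}(P_i)\cap\X(\F_q)|=d_g$ and linear disjointness, exactly as in Theorem~\ref{LRCtheorem}). The dimension bound $\dim((C_j)_i)\le s_j=\prod_{k=j}^H(d_{h_k}-\rho_k+1)$ follows because any function in $V$, restricted to these points, is a polynomial in $y_j,\dots,y_H$ of degree at most $d_{h_k}-\rho_k$ in $y_k$ for each $k$ in the range (the $f_\iota$-part and the $y_\iota$ for $\iota\notin\{j,\dots,H\}$ having become constants), so the restriction lands in a space of that dimension. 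The minimum distance bound $\delta_j=\prod_{k=j}^H\rho_k$ is the inductive heart: if fewer than $\prod_{k=j}^H\rho_k$ erasures occur in a codeword of $(C_j)_i$, then partitioning $\supp((C_j)_i)$ into the disjoint blocks $\supp((C_{j+1})_l)$ for $l$ ranging over a transversal (disjointness proved just as in Theorem~\ref{H-LRCTheorem}, using $y_j(Q_{l_1})\ne y_j(Q_{l_2})$), some block has fewer than $\prod_{k=j+1}^H\rho_k$ erasures and hence is fully recoverable by the inductive hypothesis applied to $(C_{j+1})_l$; removing a recovered block strictly decreases the erasure count and we recurse, ultimately correcting all of them.

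Finally I would record the self-similarity condition of Definition~\ref{def:hlevelH-LRC}: each $(C_j)_i$ must itself be an $(H-j)$-level H-LRC via $\{(C_k)_i : j<k\le H\}$. But this is automatic from the construction, since $(C_j)_i$ is a fiber-product code of exactly the same shape built from the curves $\Y_j,\dots,\Y_H$ over $\tilde\Y_{\{j,\dots,H\}^c}$ (more precisely, its evaluation points are the fiber of $g$ together with the fixed outer coordinates, and its function space is the span of monomials in $y_j,\dots,y_H$ with the same degree bounds), so the inductive hypothesis applies verbatim. The main obstacle I anticipate is bookkeeping the disjointness of the partition into sub-blocks at a general level $j$ — one must be careful that fixing $y_\iota$ for $\iota<j$ and $g$, while letting $y_j$ vary over its $d_{h_j}$ values, genuinely produces $d_{h_j}$ pairwise-disjoint copies of the level-$(j+1)$ code, and that the recovery recursion terminates; this is the same argument as in Theorem~\ref{H-LRCTheorem} but must be phrased so it chains cleanly through all $H$ levels. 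Everything else is a routine translation of the two-level case.
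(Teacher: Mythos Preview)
Your construction of the nested punctured codes $(C_j)_i$ and your computations of $n_j$ and $s_j$ match the paper's approach exactly. The paper defines the same sets (via the helper functions $F_k$ and $I$) and gives the same counting and restriction-of-functions arguments.

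The gap is in your minimum-distance argument. Your recursion --- partition $\supp((C_j)_i)$ into the $d_{h_j}$ blocks $\supp((C_{j+1})_l)$ according to the value of $y_j$, locate a block with fewer than $M:=\prod_{k>j}\rho_k$ erasures, recover it inductively, and repeat --- can stall. Concretely, take $H=j+1$, $d_{h_j}=d_{h_H}=3$, $\rho_j=\rho_H=2$, so $(C_j)_i$ is the $[9,4,4]$ product of two $[3,2,2]$ codes on a $3\times 3$ grid. Place all three erasures in a single row: the total is $3<\rho_j\rho_H=4$, yet the only rows with fewer than $M=2$ erasures are the two clean ones, and ``recovering'' them changes nothing; the remaining dirty row carries $3\geq M$ erasures and your level-$(j+1)$ inductive hypothesis does not apply. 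The three erasures \emph{are} recoverable (the minimum distance really is $4$), but only by using the columns. Your one-axis partition never looks at columns, so the recursion does not terminate successfully.

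The paper avoids this by arguing position-by-position rather than block-by-block, exploiting availability: each position in $\supp((C_j)_i)$ has $H-j+1$ disjoint recovery sets, one for each coordinate $y_j,\dots,y_H$. If fewer than $\prod_{k=j}^H\rho_k$ erasures occur, the paper claims (extending the two-level grid argument of Theorem~\ref{H-LRCTheorem}) that some erased position $i'$ has fewer than $\rho_k$ erasures in its $k$-th recovery set for some $k\in\{j,\dots,H\}$; recover that single symbol via its $k$-th local code, strictly decreasing the erasure count, and repeat. In the $3\times3$ example each erased position has exactly one erasure in its column, so column recovery succeeds. Replace your block recursion with this availability-based single-symbol recursion (or, alternatively, prove the distance bound algebraically as a tensor-product distance bound) and the rest of your argument goes through.
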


\begin{proof} 
    Recall that $S\subseteq \X(\F_q)$ is the set of evaluation points of $C$.  We define the following two functions to simplify exposition.  Given any set $T\subseteq S$ and $k\in [H]$, let \[F_k(T)=\{\tilde{g}_k^{-1}(\tilde{g}_k(Q)): Q\in T\},\] and \[I(T)= \{i\in\left\{ 1, \dots, n \right\}:Q_i\in T\}.\]

    For any $i\in\left\{ 1, \dots, n \right\}$, let $Q_i\in S$ be the corresponding evaluation point.  We then take \[(S_H)_i=F_H(\{Q_i\}).\]
    This is the set of all evaluation points that share a value of $g$ and all coordinate functions with $Q_i$ except for the $y_{H}$ coordinate.  For $j\in [H-1]$, let \[(S_{H-j})_i=F_{H-j}((S_{H-j+1})_i).\]
    Similarly, the points in set $(S_j)_i$ will share a value of $g$ and all coordinate functions with $Q_i$ except $y_j, y_{j+1}, \dots y_H$.
 
    We then let $(C_j)_i$ be the puncture of the code $C$ to indices in $I((S_j)_i)$. This gives $(I_j)_i=I((S_j)_i)\setminus\{i\}$. The parameters of these codes are now obtainable from simple counting.  The length $n_j$ of the code $(C_j)_i$ will be the product of the degrees of the corresponding maps $h_k$ for $j\leq k \leq H$, because the points of $S$ have been required to fully split in all the corresponding extensions.  Since the evaluation points of the code $(C_j)_i$ share a value of $g$ and all coordinate functions except $y_j, y_{j+1}, \dots y_H$, any function in the original evaluation set $V$ will restrict to a function in the span of the set $\{y_j^{e_j}y_{j+1}^{e_{j+1}}\dots y_H^{e_H}: 0\leq e_k \leq d_{h_k}-\rho_k \textrm{ for all } k\}$.  This yields the given upper bound on the dimension $s_j$ of the code $(C_j)_i$. 
    
    To bound the minimum distance of the code $(C_j)_i$, consider that, by a straight-forward extension of the argument in the proof of Theorem \ref{H-LRCTheorem}, each position $i^{\prime}$ in the support of $(C_j)_i$ has $H-j+1$ disjoint recovery sets in $(C_j)_i$ and the intersection possibilities of any recovery sets for any distinct positions in the support of $(C_j)_i$ is extremely constrained.  Thus we are able to conclude that if there are fewer than $\rho_j\rho_{j+1}\dots\rho_H$ erasures in a codeword of $(C_j)_i$, there must be some position $i^{\prime}$ with fewer than $\rho_k$ erasures in its $k$-th recovery set for some $k\in \{j, j+1,\dots, H\}$.  Therefore position $i^{\prime}$ can be recovered through its $k$-th recovery set; by repeated application any pattern of fewer than $\rho_j\rho_{j+1}\dots\rho_H$ erasures can be recovered.
\end{proof}

\begin{remark}
    In the construction of Theorem \ref{H-LRC-H-Level-Theorem}, we have arbitrarily chosen to use the coordinates $y_1, \dots, y_H$ and fixed an order on these so that the smallest code has evaluation points varying in $y_H$, intermediate middle codes each incorporate variability in previous coordinates, and the largest middle code has evaluation points varying in $y_1$ through $y_H$.  Given any choice of $H$ values in $\left\{ 1, \dots, t \right\}$ and any order of these values, we could use the fact that the fiber product construction is (up to isomorphism) commutative and associative to reorder the factors of the fiber product so that our chosen factors are in the positions $1$ through $H$ in the fiber product. Thus, this arbitrary choice of coordinates does not result in any loss of generality.
\end{remark}

\begin{remark}In \cite[Section VIII Part D]{ballentine2019codes}, the authors create codes with hierarchical locality from coverings of algebraic curves, including a construction involving fiber products which is substantially different than that described here. The fiber product construction of Ballentine, Barg, and Vladut uses curves $\mathcal X, \mathcal Y, \mathcal Z, \mathcal C$, where there exist covering maps $\phi_2:\mathcal X\rightarrow \mathcal Y$ and $\phi_1: \mathcal Y\rightarrow \mathcal Z$ with $\deg(\phi_2)=ab$ and $\deg(\phi_1)=b$, and both $\mathcal X\times_{\mathcal Z} \mathcal C$ and $\mathcal Y\times_{\mathcal Z} \mathcal C$ are  smooth, absolutely irreducible curves.  Thus the construction starts with a tower of curves $\mathcal X\rightarrow \mathcal Y \rightarrow \mathcal Z$ and then takes the fiber products (over $\mathcal Z$) of each of these curves with another curve $\mathcal C$ that is also a cover of $\mathcal Z$.  The code is then defined over $\mathcal X\times_{\mathcal Z} \mathcal C$, with hierarchy resulting from covers in the tower $\mathcal X \times_{\mathcal Z} \mathcal C\rightarrow \mathcal Y\times_{\mathcal Z} \mathcal C \rightarrow \mathcal Z\times_{\mathcal Z}  \mathcal C\cong \mathcal C$.  In this work, we pursue a more general perspective on fiber product codes based on the construction in \cite{haymaker2018locally}. One major difference is that the iterated fiber product in \cite{haymaker2018locally} starts with some collection of curves $\mathcal X_1, \mathcal X_2, \dots \mathcal X_t$, with no covering relationships among them but with each having a map to a shared curve $\mathcal Y$. We consider a code defined on
\[\mathcal X=\mathcal X_1\times_{\mathcal Y} \mathcal X_2 \times_{\mathcal Y} \dots \times_{\mathcal Y} \mathcal X_t.\] The underlying fiber product of Ballentine et al. in  \cite{ballentine2019codes} may coincide with this construction in some very special cases, but most situations of this paper and \cite{haymaker2018locally} are not covered by \cite{ballentine2019codes}.  First, our construction does not require the strong restriction on the map degrees in \cite{ballentine2019codes}.  Further, the highest level of hierarchy in our construction arises from the pullback of the map $\mathcal X\rightarrow \mathcal Y$, where $\mathcal Y$ is the base curve of every fiber product.  The base curve of the fiber product is not represented in the final tower of \cite{ballentine2019codes}.  Also, the construction in \cite{ballentine2019codes} begins with a two-covering tower of curves and uses fiber products to obtain a related two-covering tower of curves.  The construction in \cite{haymaker2018locally} that is continued here begins with many single-level covers of curves and uses each of these to create an additional level of hierarchy, with $H$ levels of hierarchy arising from $H$ single-level covers.  Though not explicitly stated, it seems that the fiber product construction in \cite{ballentine2019codes} would need to start with an $H$-level tower of curves to generalize to $H$-level hierarchy after the fiber product is applied.
\end{remark}

We return to the curve $\mathcal{A}_{q,t}$ described in Subsection \ref{subsection:ArtinSchreier}. Let $p$ be an odd prime, $t,h\in\mathbb{Z}^{+}$, $q=p^h$, $t\leq h$. Choose $l\in\mathbb{Z}$ as before to define an LRC($t$) $C_{\mathcal{A}_{q,t},l}$ over $\mathbb{F}_{q^2}$ with parameters given in Theorem \ref{thm:ASCode}.  This code has $H$-level hierarchy for any $H\leq t$. To display the full range of hierarchy, we choose $H=t$.
\begin{theorem}
The code $C_{\mathcal{A}_{q,t},l}$ constructed in Theorem \ref{thm:ASCode} is a $t$-level H-LRC with hierarchical parameters 
\begin{itemize}
    \item $n_j=p^{t+1-j}$,
    \item $s_j=(p-1)^{t+1-j}$, and
    \item $\delta_j=2^{t+1-j}$,
\end{itemize}
for $1\leq j\leq t$.
\end{theorem}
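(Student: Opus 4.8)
The plan is to obtain this as a direct corollary of Theorem~\ref{H-LRC-H-Level-Theorem} applied to the specific fiber product $\mathcal{X}=\mathcal{A}_{q,t}$. The key observation, already recorded in the proof of Theorem~\ref{thm:ASCode}, is that all $t$ factor curves $\mathcal{Y}_i\colon y_i^p-y_i=a_iy_0^{q+1}$ are isomorphic Artin--Schreier curves, and the map $h_i\colon\mathcal{Y}_i\to\mathcal{Y}=\mathbb{P}^1$ given by projection onto $y_0$ has degree $p$ for every $i$. Correspondingly, the LRC($t$) parameters from Theorem~\ref{thm:ASCode} give $r_i=p-1$ and $\rho_i=2$ uniformly. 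So first I would verify that $C_{\mathcal{A}_{q,t},l}$ satisfies the hypotheses of Theorem~\ref{H-LRC-H-Level-Theorem}: this is exactly the statement that it is constructed as in Theorem~\ref{LRCtheorem} from a fiber product of curves, which is how it was defined in Subsection~\ref{subsection:ArtinSchreier}, together with the earlier observation that the relevant function field extensions are linearly disjoint (maps of coprime... here all degree $p$, but linear disjointness for this family is established in \cite{van1995construct, chara2022minimum}).

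Next I would simply substitute $d_{h_k}=p$, $\rho_k=2$, hence $d_{h_k}-\rho_k+1=p-1$, into the three hierarchical parameter formulas of Theorem~\ref{H-LRC-H-Level-Theorem} with $H=t$. This gives
\begin{itemize}
    \item $n_j=\prod_{k=j}^{t}d_{h_k}=\prod_{k=j}^{t}p=p^{\,t-j+1}$,
    \item $s_j=\prod_{k=j}^{t}(d_{h_k}-\rho_k+1)=\prod_{k=j}^{t}(p-1)=(p-1)^{\,t-j+1}$,
    \item $\delta_j=\prod_{k=j}^{t}\rho_k=\prod_{k=j}^{t}2=2^{\,t-j+1}$,
\end{itemize}
for $1\le j\le t$, which is precisely the claimed parameter list. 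I would also remark that taking $H=t$ is the maximal choice; any $H\le t$ works by the same theorem, but $H=t$ exhibits the full depth of hierarchy available from the $t$ factor curves.

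Honestly, there is no real obstacle here: the theorem is a specialization of Theorem~\ref{H-LRC-H-Level-Theorem}, and the proof is a one-paragraph invocation plus the arithmetic above. The only point requiring a word of care is confirming that the hypotheses of Theorem~\ref{LRCtheorem} (and hence of Theorem~\ref{H-LRC-H-Level-Theorem}) genuinely hold for $\mathcal{A}_{q,t}$ --- in particular that $\mathbb{F}_{q^2}$ is the full field of constants in each $\mathbb{F}_{q^2}(\mathcal{Y}_i)$, that the extensions are linearly disjoint, and that the chosen bound on $l$ keeps the relevant minimum-distance quantity $d$ positive --- but all of these are exactly the conditions under which $C_{\mathcal{A}_{q,t},l}$ was constructed in Theorem~\ref{thm:ASCode}, citing \cite{haymaker2018locally, chara2022minimum, van1995construct}. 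So the proof reduces to: "$C_{\mathcal{A}_{q,t},l}$ is an LRC($t$) of the form in Theorem~\ref{LRCtheorem} with all $d_{h_k}=p$ and all $\rho_k=2$; apply Theorem~\ref{H-LRC-H-Level-Theorem} with $H=t$ and simplify."
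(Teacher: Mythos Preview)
Your proposal is correct and matches the paper's own proof essentially verbatim: the paper also simply invokes Theorem~\ref{H-LRC-H-Level-Theorem} with $d_{h_k}=p$ and $\rho_k=2$ for all $k$ and reads off the parameters. Your additional remarks about verifying the hypotheses and the choice $H=t$ are accurate but more detailed than what the paper records.
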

\begin{proof}
The parameters are a direct application of Theorem \ref{H-LRC-H-Level-Theorem}, where our construction gives $\rho_k=2$ and $d_{h_k}=p$ for all $k$.
\end{proof}

\section{Combined hierarchy and availability }
\label{section:hierarchy and availabililty}

Finally, we incorporate the notion of availability into our study of hierarchical locality.  The following definition is a generalization of those given by Freij-Hollanti, Westerback, and Hollanti in \cite{freij2016locally} and Ballentine, Barg, and Vladut in \cite{ballentine2019codes}.  

\begin{definition} \label{def:hlevelH-LRCt}
    Let $H, n_j, t_j, s_j, \delta_j, r_{j,k}, \rho_{j,k}, \in\z^+$ for all $j\in [H]$ and $k\in [t_j]$ with $H\geq 2$, $s_{1}\geq s_{2}\geq \dots \geq s_{H}$ and $\delta_{1}>\delta_{2}>\dots >\delta_{H}$. 
Let $C$ be an $[n,k,d]$ linear code with $H$-level hierarchical locality with parameters  $(n_j, s_{j},\delta_{j})_{j\in [H]}$. For every $i\in \left\{ 1, \dots, n \right\}$, $j\in [H]$, let $(C_{j})_i$ be the $j$-th level middle code for position $i$. If, for all $i$, $(C_{j})_i$ is a locally recoverable code with  availability $t_j$ and local parameters $((r_{j,k},\rho_{j,k})_{k\in [t_j]})$, we say that $C$ has $H$-level hierarchy with availability $(t_j)_{j\in[H]}$.
\end{definition}

The notion of H-LRCs with availability is useful  in capturing the additional flexibility of the geometric constructions we present here.  Intuitively, if a middle code has availability, it potentially offers the ability to recover many erasure patterns in the middle code using local recovery.  There may be another erasure recovery algorithm for the middle code, but having availability means we have additional local strategies at our disposal.

We now discuss the parameters of our example families when viewed in the more structured framework as codes with both $H$-level hierarchy and availability.  Definition \ref{def:hlevelH-LRCt} describes codes with an $H$-level hierarchical recovery structure and $t_j$ linearly independent recovery sets at level $j$, $j\in [H]$. 

\subsection{Reed-Muller codes with hierarchy and availability} At each hierarchy level of the Reed-Muller code, there is natural availability using the underlying geometry of the affine spaces of $\mathbb{F}_q^m$. 

\begin{corollary}\label{cor:RM-LRCt} 
    Let $C$ be the hierarchical locally recoverable code in Theorem~\ref{thm:rm-hierarchy}. Then $C$ is also an $H=(m-1)$-LRC with availability parameters $t_j=\frac{q^{m+1-j}-1}{q-1}$, for $j=1, \ldots, m-1$, and $(r_{j,k}, \rho_{j,k})=(r_k, \rho_k)$ for $k\in [t_j]$ and $j\in [H]$. 
\end{corollary}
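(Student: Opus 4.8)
The plan is to leverage the explicit geometric description of the middle codes already established in Theorem~\ref{thm:rm-hierarchy}, together with the observation made earlier in the excerpt (in the discussion following the definition of affine subspaces) that each Reed-Muller code $\mathcal{RM}_q(v,m')$ with $v < q-1$ has availability $\frac{q^{m'}-1}{q-1}$ via the pencil of $\F_q$-rational lines through each point. Recall from the proof of Theorem~\ref{thm:rm-hierarchy} that for a position $i$ corresponding to a point $P \in \F_q^m$, the $j$-th level middle code $(C_j)_i$ is the puncture of $C$ to the coordinates lying on an $(m-j)$-flat containing $P$, and this code is isometric to $\mathcal{RM}_q(v, m-j)$.

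First I would fix a position $i$ (with point $P$) and a hierarchy level $j \in \{1, \dots, m-1\}$, and recall that $(C_j)_i \cong \mathcal{RM}_q(v, m-j)$ on an $(m-j)$-dimensional affine subspace $A_j \ni P$. Next I would invoke the availability of Reed-Muller codes: since $v \leq q-1$ (and in fact we need $v < q-1$ for nontrivial locality, matching the standing hypothesis), inside the affine geometry $A_j \cong \F_q^{m-j}$ there are exactly $\frac{q^{m-j}-1}{q-1} = \frac{q^{m+1-j}-1}{q-1}$ distinct $\F_q$-rational lines through $P$. Note here $m-j = (m-1)-j+1 = H-j+1$, so $t_j = \frac{q^{H-j+1}-1}{q-1}$, which is exactly the claimed value. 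Each such line, with $P$ removed, is a recovery set of size $q-1$ for position $i$ within $(C_j)_i$, and the value of any degree-$\leq v$ polynomial at $P$ is interpolated from its $q-1$ values on the rest of the line; this gives local parameters $(r_k, \rho_k) = (q-1, q-v)$ for every $k$, independent of $j$, matching the $(r_{j,k},\rho_{j,k}) = (r_k,\rho_k)$ assertion since the bottom-level code $\mathcal{RM}_q(v,1)$ has these same parameters.

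The one genuine point requiring care is the disjoint-repair-group / linear-independence condition in Definition~\ref{def:hlevelH-LRCt} (inherited from Definition~\ref{LRCtdef2}): the recovery sets must be pairwise independent in the required sense. This follows from the same argument sketched in the excerpt for the full Reed-Muller code --- two distinct lines through $P$ meet only in $P$, so the recovery sets $I_{j,k} = (\text{line}_k) \setminus \{P\}$ are pairwise disjoint, hence certainly satisfy the independence requirement that $I_{j,k}$ contains $\dim((C_j)_i)$-many coordinates outside the union of the other recovery sets (in fact the punctured line codes here have dimension $q-1$ and the $q-1$ points of $\text{line}_k \setminus \{P\}$ are disjoint from all other $I_{j,k'}$). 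I expect this verification of disjointness to be the main --- though still routine --- obstacle, as everything else is a direct quotation of Theorem~\ref{thm:rm-hierarchy} and the availability remark for Reed-Muller codes. I would close by noting that since this holds for every $i$ and every $j$, the code $C$ has $H$-level hierarchy with availability $(t_j)_{j\in[H]}$ in the sense of Definition~\ref{def:hlevelH-LRCt}, completing the proof.
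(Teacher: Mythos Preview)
Your overall strategy---identifying the $j$-th middle code with $\mathcal{RM}_q(v,m-j)$ and invoking the pencil of lines through $P$---is natural, but the arithmetic does not close. You write $\frac{q^{m-j}-1}{q-1} = \frac{q^{m+1-j}-1}{q-1}$; these are not equal (the exponents differ by one). Counting lines through $P$ inside an $(m-j)$-flat gives exactly $\frac{q^{m-j}-1}{q-1}$, which is strictly smaller than the stated $t_j=\frac{q^{m+1-j}-1}{q-1}$. So your argument, as written, establishes a weaker availability than asserted and does not prove the corollary. You need to reconcile this: either the count must be taken in a space one dimension larger (an $(m-j+1)$-flat, which does yield $\frac{q^{m+1-j}-1}{q-1}$), or the indexing must be read differently, or there is a typo in the statement. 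You cannot simply assert the equality and move on.

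The paper gives no formal proof, but the remark immediately following the corollary takes a different view from yours: it considers \emph{distinct $(m-j)$-flats} $A_1,A_2$ through $P$---that is, different choices of middle code at level $j$---and checks the independence condition of Definition~\ref{LRCtdef2} by producing a recovery line lying in $A_1\setminus(A_1\cap A_2)$. This indicates the intended ``availability at level $j$'' concerns the multiplicity of level-$j$ middle codes (equivalently, recovery inside the ambient $(m-j+1)$-flat serving as the level-$(j-1)$ code), not lines inside a single fixed $(m-j)$-flat. That reading is consistent with the extra factor of $q$ in the stated $t_j$. Your disjoint-lines argument is correct as far as it goes, but it targets a different, strictly weaker, availability than what the remark and the stated formula have in mind.
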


\begin{remark}
    For a fixed point $P\in \mathbb{F}_q^m$ that corresponds to the codeword position of $C$ that we seek to recover, any pair of  $(m-j)$-flats that contain $P$  must overlap in an affine subspace of dimension $(m-j-1)$. Suppose $A_1, A_2$ are two affine subspaces of dimension $(m-j)$ with $P\in A_1$, $P\in A_2$;  then $B=A_1\cap A_2$ is an affine subspace of dimension $(m-j-1)$.  We note that the conditions of  Definition~\ref{def:hlevelH-LRCt} are still satisfied by the code $C$ since there is a recovery method using only the points in $A_1\setminus B$ (resp., $A_2\setminus B$), as follows. To recover the codeword value at the point $P$, consider a line $L$ through $P$ such that $L\setminus P\subseteq A_1\setminus B$. Notice that $|(L\setminus P)\cap (A_1\setminus B)|=q-1$, since otherwise the entire line would be contained in the affine subspace $B$. Thus, since $v<q$ the value of a degree $v$ polynomial can be recovered from the evaluation of the polynomial on the points on $L\setminus P$. 
\end{remark}

\subsection{Fiber product codes with hierarchy and availability}
For codes from fiber products, we get natural availability at each level. 

\begin{corollary}\label{thm:FiberProductH-LRCt}
Let $t\in\z^+$, $t\geq 2$. Let $\X$ be a fiber product satisfying the hypotheses of and with all notation as in Theorem \ref{LRCtheorem}.  Let $C=C_{\X}$ be the $[n,k,d]$, $[(r_j=d_{h_j}-\rho_j+1, \rho_j)]_{j\in\left\{ 1, \dots, t \right\}}$ LRC($t$) constructed in the proof of Theorem \ref{LRCtheorem}. Let $H\in\left\{ 1, \dots, t \right\}$, $2\leq H$, and consider the code $C$ described as an H-LRC in Theorem \ref{H-LRC-H-Level-Theorem}.  Then $C$ is an H-LRC with $H$-level hierarchy and hierarchical parameters as given in the theorem, with availability parameters $t_j=H+1-j$ for $j\in[H]$ and $(r_{j,k},\rho_{j,k})=(r_k,\rho_k)$ for $k\in[t_j]$ and $j\in[H]$.

\end{corollary}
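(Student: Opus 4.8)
The plan is to verify that the $H$-level hierarchical structure of $C$ already constructed in Theorem~\ref{H-LRC-H-Level-Theorem} comes equipped with the claimed availability at each level, by identifying the independent recovery sets inside each middle code $(C_j)_i$ and checking they satisfy the disjointness/independence requirement of Definition~\ref{def:hlevelH-LRCt}. First I would recall the notation from the proof of Theorem~\ref{H-LRC-H-Level-Theorem}: for a fixed position $i$ with evaluation point $Q_i\in S$, the $j$-th middle code $(C_j)_i$ is the puncture of $C$ to the index set $I((S_j)_i)$, where $(S_j)_i$ consists of all evaluation points agreeing with $Q_i$ in $g$ and in every coordinate function $y_\iota$ with $\iota < j$ (equivalently, varying only in $y_j, y_{j+1}, \dots, y_H$). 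The key observation, already noted in that proof, is that every position $i'$ in $\supp((C_j)_i)$ has $H-j+1$ recovery sets available inside $(C_j)_i$, namely the sets $\supp(C_{i',k}) \setminus \{i'\}$ for $k \in \{j, j+1, \dots, H\}$, where $C_{i',k}$ is the $k$-th single-level recovery code of the ambient LRC($t$). This gives exactly $t_j = H-j+1 = H+1-j$ recovery sets at level $j$, with local parameters $(r_k, \rho_k)$ inherited from Theorem~\ref{LRCtheorem}, as claimed.

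The substantive step is to confirm that these $H+1-j$ recovery sets genuinely satisfy the availability condition, i.e. that for each $k$ there are at least $\dim(C_{i',k})$ linearly independent positions in the $k$-th recovery set that lie outside all the other recovery sets (the disjoint-repair-group style property from Definition~\ref{LRCtdef2}). I would argue this geometrically using the coordinate-function description: the recovery set $\supp(C_{i',k})\setminus\{i'\}$ consists of points that agree with $Q_{i'}$ in $g$ and in every $y_\iota$ with $\iota \ne k$, while varying in $y_k$. Two such sets for distinct indices $k_1 \ne k_2$ (both in $\{j,\dots,H\}$) intersect only in a point that agrees with $Q_{i'}$ in all of $y_j,\dots,y_H$ — and since the points of $S$ fully split in each extension, the only candidate is $Q_{i'}$ itself, which has been removed. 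Hence the recovery sets are pairwise disjoint, which is stronger than the linear-independence-outside-the-union requirement; combined with the fact that each $C_{i',k}$ is a punctured Reed-Solomon-type code of full dimension $r_k = d_{h_k}-\rho_k+1$ on its support, the independence condition follows immediately. This is essentially the same reasoning as in the proof of Theorem~\ref{H-LRCTheorem} where disjointness of $\supp(C_{l_1,k})$ and $\supp(C_{l_2,k})$ was established, extended to vary over which coordinate direction we forget.

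The remaining bookkeeping is to check that these level-$j$ recovery sets are genuinely \emph{inside} the level-$j$ middle code and not merely inside the ambient code — that is, that $\supp(C_{i',k}) \subseteq \supp((C_j)_i)$ for every $i' \in \supp((C_j)_i)$ and every $k \in \{j,\dots,H\}$. This holds because if $Q_{i'} \in (S_j)_i$ (so $Q_{i'}$ agrees with $Q_i$ in $g$ and in $y_\iota$ for $\iota < j$) and $Q$ agrees with $Q_{i'}$ in $g$ and in $y_\iota$ for all $\iota \ne k$ with $k \geq j$, then in particular $Q$ agrees with $Q_{i'}$ — hence with $Q_i$ — in $g$ and in every $y_\iota$ with $\iota < j$, so $Q \in (S_j)_i$. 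Thus $(C_j)_i$ is itself an LRC with availability $t_j = H+1-j$ and local parameters $((r_k,\rho_k))_{k \in \{j,\dots,H\}}$, which after reindexing $k$ to run over $[t_j]$ yields the statement. I expect the main (minor) obstacle to be purely notational: carefully reconciling the single-level recovery codes $C_{i',k}$ of Theorem~\ref{LRCtheorem}, indexed by $k \in \{1,\dots,t\}$, with the availability index $k \in [t_j]$ of Definition~\ref{def:hlevelH-LRCt}, and making explicit that we are only using the directions $k \in \{j, j+1, \dots, H\}$ at level $j$ — everything else is a direct consequence of the structure already exhibited in the proof of Theorem~\ref{H-LRC-H-Level-Theorem}, so no new estimates are required.
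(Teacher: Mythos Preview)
Your proposal is correct and matches the paper's approach. The paper itself provides no explicit proof for this corollary, treating it as an immediate consequence of the structure already exhibited in the proof of Theorem~\ref{H-LRC-H-Level-Theorem}, where it is noted that each position $i'$ in the support of $(C_j)_i$ has $H-j+1$ disjoint recovery sets inside $(C_j)_i$; your write-up is precisely the natural elaboration of that observation, supplying the disjointness and containment checks that the paper leaves implicit.
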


\begin{remark}
    As in the case of simple hierarchy, the authors of \cite{ballentine2019codes} devise H-LRCs with availability from a certain $2$-level fiber product (see Figure 2 in the cited paper).  Their construction is a specialization of their main construction of hierarchy from a two-cover tower to a situation where each cover can be decomposed such that the corresponding covering curve is expressed as a fiber product over the covered curve. Our construction is not captured by \cite{ballentine2019codes}.  Our construction uses fiber products differently. One very concrete difference is that our construction does not ever involve a fiber product using another fiber product as a base curve.  Our hierarchy comes from the fibers of maps in a tower of fiber products with an increasing number of factors, with all fiber products using $\Y$ as a base curve, as in Figure \ref{Fig:iteratedfiber}.
\end{remark}

\section{Conclusion} \label{conclusion}

In this paper, we harnessed the underlying structure of Reed-Muller and fiber product codes to provide hierarchical local recovery of erasures. H-LRCs have the advantage that they make use of smaller recovery sets for larger numbers of erasures than LRCs without tiered recovery sets of varying sizes. The constructions considered here are especially useful as their properties rely on the underlying geometry and immediately satisfy the disjoint or linearly independent repair group property. 
It remains to consider hierarchical recovery when disjoint or linearly independent repair groups are not required and how that may yield additional recovery sets at each level. 

\backmatter 
\bmhead{Acknowledgments}
The National Science Foundation partially supported the second author (DMS-2137661) and the third author (DMS-2201075). The third author is also partially supported by the Commonwealth Cyber Initiative.

\bibliography{HLRC-biblio}

 \end{document}